\documentclass[letterpaper, 10 pt, journal, twoside]{ieeetran}

% \overrideIEEEmargins % Needed to meet printer requirements.

\pdfminorversion=4          

\usepackage[pdftex, pdfstartview={FitV}, pdfpagelayout={TwoColumnLeft},bookmarksopen=true,plainpages = false, colorlinks=true, linkcolor=black, citecolor = black, urlcolor = black,filecolor=black , pagebackref=false,hypertexnames=false, plainpages=false, pdfpagelabels ]{hyperref}

\usepackage{setspace}
\onehalfspacing

\usepackage{graphicx}
\usepackage{epstopdf}
\usepackage{amssymb,amsmath}

\usepackage{amsthm}
\usepackage{mathtools}
\usepackage{cuted}
\usepackage{cases}
\usepackage[font=footnotesize]{subcaption}

\usepackage[font=footnotesize]{caption}
\usepackage{xcolor}
\usepackage{units}
\usepackage{algorithmic}
\usepackage[linesnumbered, ruled, vlined]{algorithm2e}
\usepackage{balance}
\usepackage[sort,compress,noadjust]{cite}
\usepackage{tabularx}
\usepackage{nameref}
\usepackage{centernot}

\usepackage{arydshln}
\setlength\dashlinedash{0.5pt}
\setlength\dashlinegap{1.5pt}
\setlength\arrayrulewidth{0.5pt}

\newtheoremstyle{definition}{}{}{}{}{\bfseries}{.}{.5em}{\thmname{#1}\thmnumber{ #2}\thmnote{ (#3)}}
\theoremstyle{definition}
\newtheorem{definition}{Definition}

% \newtheoremstyle{myproblem}{}{}{}{}{\bfseries}{--}{.5em}{\thmname{#1}\thmnumber{ #2} \thmnote{#3}}
% \theoremstyle{myproblem}
% \newtheorem{problem}{Problem}

\usepackage{enumitem}

\theoremstyle{plain}
\newtheorem{theorem}{Theorem}
\theoremstyle{plain}
\newtheorem{proposition}{Proposition}
\theoremstyle{plain}
\newtheorem{lemma}{Lemma}
\theoremstyle{plain}
\newtheorem{corollary}{Corollary}
\theoremstyle{definition}
\newtheorem{assumption}{Assumption}
\theoremstyle{remark}
\newtheorem{remark}{Remark}

\usepackage[hang,flushmargin]{footmisc}

\usepackage[capitalize]{cleveref}
\crefformat{equation}{(#2#1#3)}
\Crefformat{equation}{Equation~(#2#1#3)}
\Crefname{equation}{Equation}{Eqs.}

\usepackage{accents}

\usepackage[authormarkuptext=name,addedmarkup=bf,authormarkupposition=right]{changes}
\definechangesauthor[name={BTL}, color={blue}]{bl}

\title{
Dynamic Adaptation Gains for Nonlinear Systems with Unmatched Uncertainties
}

\author{Brett T. Lopez$^{1}$ and Jean-Jacques Slotine$^{2}$% <-this % stops a space
\thanks{$^{1}$Verifiable and Control-Theoretic Robotics Laboratory, University of California, Los Angeles, Los Angeles CA, {\tt\small btlopez@ucla.edu}}
\thanks{$^{2}$Nonlinear Systems Laboratory, Massachusetts Institute of Technology, Cambridge MA, {\tt\small jjs@mit.edu}}
}

\begin{document}

\maketitle
\thispagestyle{empty}
\pagestyle{empty}

\begin{abstract}
We present a new direct adaptive control approach for nonlinear systems with unmatched and matched uncertainties. 
The method relies on adjusting the adaptation gains of individual unmatched parameters whose adaptation transients would otherwise destabilize the closed-loop system.
The approach also guarantees the restoration of the adaptation gains to their nominal values and can readily incorporate direct adaptation laws for matched uncertainties.
The proposed framework is general as it only requires stabilizability for all possible models.
\end{abstract}
\begin{IEEEkeywords}
Adaptive control, uncertain systems.
\end{IEEEkeywords}

%%%%%%%%%%%%%%%%%%%%%%%%%%%%%%%%%%%%%%%%%%%%%%%%%%%%%%%%%%%%%%%%%%%%
\section{Introduction}

% Existing methods and what they require
\IEEEPARstart{A}{daptive} control of systems with unmatched uncertainties, i.e., model perturbations outside the span of the control input matrix, is notoriously difficult because the uncertainties cannot be directly canceled by the control input.
This is especially true for nonlinear systems where combining a stable model estimator with a nominal feedback controller does not necessarily yield a stable closed-loop system without imposing limits on the growth rate of the uncertainties, so as to prevent finite escape. 
The prevailing approach for handling nonlinear systems with unmatched uncertainties has been to construct nominal feedback controllers that are robust to time-varying parameter estimates \cite{krstic1995nonlinear}.
More precisely, one seeks to construct an input-to-state stable control Lyapunov function (ISS-clf) \cite{sontag1995characterizations} which, by construction, ensures the system converges to a region near the desired state despite parameter estimation error and transients.
Once an ISS-clf and corresponding controller are known, then any stable model estimator can be employed without concern of instability.  
From a theoretical standpoint, this framework guarantees stable closed-loop control and estimation despite unmatched uncertainties but requires constructing an ISS-clf -- a nontrivial task unless the system takes a particular structure.

The approach discussed above is categorized as \textit{indirect} adaptive control and entails combining a robust controller with a stable model estimator. 
Conversely, \textit{direct} adaptive control uses Lyapunov-like stability arguments to construct a parameter adaptation law that guarantees the state converges to a desired value.
This eliminates the robustness requirements inherent to indirect adaptive control which, in some ways, simplifies design.
However, since unmatched uncertainties cannot be directly canceled through control, it is necessary to construct a \emph{family} of Lyapunov functions that depend on the estimates of the unmatched parameters.
This model dependency introduces sign-indefinite terms (related to the parameter estimation transients) in the stability proof that require sophisticated direct adaptive control schemes to achieve stable closed-loop control and adaptation. 
Adaptive control Lyapunov functions (aclf) where proposed to cancel the problematic transient terms by synthesizing a family of clf's for a modified dynamical system that depends on the family of clf's \cite{krstic1995control}.
Recently, \cite{lopez2021universal} proposed a method that adjusts the adaptation gain online to cancel the undesirable transient terms and requires no modifications to the standard clf definition or dynamical system of interest. 

The main contribution of this work is a new direct adaptive control methodology which generalizes and improves the online adaptation gain adjustment approach from \cite{lopez2021universal}, subsequently called \emph{dynamic adaptation gains}, yielding a superior framework for handling various forms of model uncertainties.
{The new framework has two distinguishing properties: 1) adaptation gains are \emph{individually} adjusted online to prevent parameter adaptation transients from destabilizing the system and 2) the adaptation gain is \emph{guaranteed} to return to its nominal value asymptotically.}
Conceptually, the adaptation gain is lowered, i.e., adaptation slowed, for parameters whose transients is destabilizing; the adaptation gain then returns to its nominal once the transients subsides. 
Noteworthy properties are examined, e.g., handling of matched uncertainties, and several forms of the dynamic adaptation gain update law are derived.
The approach only relies on stabilizability of the uncertain system so it is a very general framework.
Simulation results of a nonlinear system with unmatched and matched uncertainties demonstrates the approach.

\textit{Notation:} The set of positive and strictly-positive scalars will be denoted as $\mathbb{R}_{\geq 0}$ and $\mathbb{R}_{>0}$, respectively.
The shorthand notation for a function $T$ parameterized by a vector $a$ with vector argument $s$ will be $T_a(s) = T(s;a)$.
The partial derivative of a function $N(x,y)$ will sometimes denoted as $\nabla_x N(x,y) = \frac{\partial N}{\partial x}$ where the subscript on $\nabla$ is omitted when there is no ambiguity.

%%%%%%%%%%%%%%%%%%%%%%%%%%%%%%%%%%%%%%%%%%%%%%%%%%%%%%%%%%%%%%%%%%%%
\section{Problem Formulation}
This work addresses control of uncertain dynamical systems of the form
\begin{equation}
    \dot{x} = f(x,t) - \Delta(x,t)^\top \theta + B(x,t) u,
    \label{eq:dyn}
\end{equation}
with state $x \in \mathbb{R}^n$, control input $u \in \mathbb{R}^m$, nominal dynamics $f: \mathbb{R}^n \times \mathbb{R}_{\geq 0} \rightarrow \mathbb{R}^{n}$, and control input matrix $B: \mathbb{R}^n \times \mathbb{R}_{\geq 0}  \rightarrow \mathbb{R}^{n\times m}$. 
The uncertain dynamics are a linear combination of known regression vectors $\Delta: \mathbb{R}^n \times \mathbb{R}_{\geq 0} \rightarrow \mathbb{R}^{p\times n}$ and unknown parameters $\theta \in \mathbb{R}^p$.
We assume that \cref{eq:dyn} is locally Lipschitz uniformly and that the state $x$ is measured.
The following assumption is made on the unknown parameters $\theta$.

\begin{assumption}
\label{assumption:params}
The unknown parameters $\theta$ belong to a known compact convex set $\Theta \subset \mathbb{R}^p$.
\end{assumption}

An immediate consequence of Assumption~\ref{assumption:params} is that the parameter estimation error $ \ \tilde{\theta} \triangleq \hat{\theta} - \theta \ $ must also belong to a known compact convex set $\tilde{\Theta}$.
Each parameter must then have a finite maximum error where $|\tilde{\theta}_i| \leq \tilde{\vartheta}_i < \infty$ for $i=1,\dots,p$.

%%%%%%%%%%%%%%%%%%%%%%%%%%%%%%%%%%%%%%%%%%%%%%%%%%%%%%%%%%%%%%%%%%%%
\section{Main Results}
\label{sec:main}

\subsection{Overview}
\label{sub:overview}
This section presents the main results of this work.
We first review the definition of the unmatched control Lyapunov function \cite{lopez2021universal} and its role in the proposed approach.
We then derive the first main result followed by several noteworthy extensions, e.g., the scenario where unmatched and matched uncertainties are present in addition to augmenting the adaptation law with model estimation (composite adaptation).
We also present the the so-called leakage modification that guarantees the adaptation gains return to their nominal values asymptotically. 
Conceptually, the proposed approach relies on adjusting the adaptation gain of \emph{individual} parameter estimates online to prevent destabilization induced by parameter adaptation transients.  
This is in stark contrast to \cite{lopez2021universal} where a single gain for \emph{all} parameters was adjusted.
Consequently this work can be considered a more general and natural formulation of \cite{lopez2021universal}.

\subsection{Unmatched Control Lyapunov Functions}
Similar to \cite{lopez2021universal}, the so-called unmatched control Lyapunov function will be used through this letter.
\begin{definition}[cf.~\cite{lopez2021universal}]
\label{def:uclf}
A smooth, positive-definite function $V_\theta : \mathbb{R}^n \times \mathbb{R}^p \times \mathbb{R}_{\geq 0} \rightarrow \mathbb{R}_{\geq 0}$ is an \emph{unmatched control Lyapunov function} (uclf) if it is radially unbounded in $x$ and for each $\theta \in \Theta \subset \mathbb{R}^p$
\begin{equation*}
    \begin{aligned}
        \underset{u \in \mathbb{R}^m}{\mathrm{inf}} \left\{ \frac{\partial V_{\theta}}{\partial t} + \frac{\partial V_\theta}{\partial x}^\top \left[ f - \Delta^\top \theta + B u \right] \right\} \leq  -Q_\theta
    \end{aligned}
\end{equation*}
where $Q_\theta : \mathbb{R}^n \times \mathbb{R}^p \rightarrow \mathbb{R}_{\geq 0}$ is continuously differentiable, {radially unbounded in $x$,} and positive-definite.
\end{definition}

\begin{remark}
    \cref{def:uclf} can also be stated in terms of a contraction metric~\cite{lohmiller1998contraction}, {see \cite[Def. 1]{lopez2021universal}} for more details.
\end{remark}

\Cref{def:uclf} is noteworthy because the existence of an uclf  is equivalent (see \cite[Prop. 1]{lopez2021universal}) to system \cref{eq:dyn} being stabilizable for all $\theta \in \Theta$.
This is the weakest possible requirement one can impose on \cref{eq:dyn} and therefore showcases the generality of the approach. 
\cref{def:uclf} involves constructing a clf for each possible model realization, i.e., a \emph{family} of clf's, {so convergence to the desired equilibrium can be achieved if a suitable adaptation law can be derived.}
{Pragmatically, this can be done analytically, e.g., via backstepping, or numerically via discretization or sum-of-squares programming with the uclf search occurring over the state-parameter space.}
\cref{def:uclf} adopts the certainty equivalence principle philosophy: design a clf (or equivalently a controller) as if the unknown parameters are known and simply replace them with their estimated values online.
This intuitive design approach is not generally possible for nonlinear systems with unmatched uncertainties unless additional robustness properties are imposed or the system be modified -- both representing a departure from certainty equivalence.
The approach developed in this work completely bypasses any additional requirements or system modifications by instead adjusting the adaptation gain online where the so-called adaptation gain update law yields a stable closed-loop system.
In other words, the proposed dynamic adaptation gains method expands the use of the certainty equivalence principle to general nonlinear systems with unmatched uncertainties, simplifying the design of stable adaptive controllers.

\subsection{Dynamic Adaptation Gains}
\label{sec:dag}
The proposed method involves adjusting the adaptation gain---typically denoted as a scalar $\gamma$ or symmetric positive-definite matrix $\Gamma$ is the literature---to prevent the parameter adaptation transients from destabilization the closed-loop system.
It is convenient to express the adaptation gain as a function of a scalar argument that has certain properties.
We will make use of the following definition in deriving an adaptation gain update law that achieves closed-loop stability.

\begin{definition}
\label{def:adag}
    An \emph{admissible dynamic adaptation gain} $\gamma: \mathbb{R} \rightarrow \mathbb{R}_{>0}$ is a function with scalar argument $\rho$ such that $\gamma(0) = \bar{\gamma}$ is the nominal adaptation gain, $\lim_{\rho\rightarrow-\infty} \gamma(\rho) = c > 0$, and $0 < \nabla\gamma(\rho) < \infty$.
\end{definition}
\begin{remark}
    There are several functions at the disposal of the designer when selecting an admissible dynamic adaptation gain. Two examples are $\gamma(\rho) = \bar{\gamma} \, (\frac{0.9}{\rho^2+1} + 0.1)$ or $\gamma(\rho) = \bar{\gamma} \, (0.9\,\mathrm{exp}(\rho/\tau) + 0.1)$.
\end{remark}

With \cref{def:uclf,def:adag}, we are now ready to state the first main theorem of this work.

\begin{theorem}
\label{thm:dyn_gain}
Consider the uncertain system \cref{eq:dyn} with $x_d$ being the desired equilibrium point.
If an uclf $\,V_\theta(x, t)\,$ exists, then $x \rightarrow x_d$ asymptotically with the adaptation law
\begin{equation}
\label{eq:dot_theta_v}
        \dot{\hat{\theta}} = - \, \mathrm{diag}(\gamma_1(\rho_1), \dots, \gamma_p(\rho_p)) \,\Delta(x,t) \frac{\partial V_{\hat{\theta}}}{\partial x} \tag{2a}
\end{equation}
where each $\gamma_i(\cdot)$ is an admissible dynamic adaptation gain whose update law satisfies
\begin{equation}
\label{eq:dot_gamma_V}
    \dot{\gamma}_i(\rho_i) \leq - 2 \frac{\gamma_i(\rho_i)^2}{(\eta_i - \tilde{\theta}^2_i)} \frac{\partial V_{\hat{\theta}}}{\partial \hat{\theta}_i} \, \dot{\hat{\theta}}_i \tag{2b}
\end{equation}
for $\eta_i > \tilde{\vartheta}_i \geq \tilde{\theta}_i$ with $i=1,\dots,p$.
\end{theorem}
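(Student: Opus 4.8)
The natural approach is a Lyapunov argument built on an augmented candidate that combines the uclf evaluated at the current estimate with a term penalizing the parameter error, the latter scaled by the inverse of the (time-varying) adaptation gains. Specifically, I would take
\begin{equation*}
    W = V_{\hat\theta}(x,t) + \sum_{i=1}^p \frac{1}{2}\,\frac{\tilde\theta_i^2}{\gamma_i(\rho_i)}.
\end{equation*}
This is the classical Lyapunov function for direct adaptive control, except that (i) $V$ depends on $\hat\theta$, which is what forces the extra transient terms, and (ii) the gains $\gamma_i$ are now state-dependent, so differentiating $1/\gamma_i$ produces a $\dot\gamma_i$ contribution that must be controlled — this is exactly what \cref{eq:dot_gamma_V} is designed to cancel.

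First I would differentiate $V_{\hat\theta}$ along \cref{eq:dyn}, using the certainty-equivalence controller that achieves the infimum in \cref{def:uclf} with $\theta$ replaced by $\hat\theta$. This yields $\dot V_{\hat\theta} \le -Q_{\hat\theta} + \frac{\partial V_{\hat\theta}}{\partial x}^\top \Delta^\top \tilde\theta + \frac{\partial V_{\hat\theta}}{\partial \hat\theta}^\top \dot{\hat\theta}$, where the first cross term comes from the mismatch $-\Delta^\top\theta$ vs.\ the $-\Delta^\top\hat\theta$ baked into the controller, and the second — the problematic transient term — comes from $V$'s explicit dependence on $\hat\theta$. Next I would differentiate the parameter-error term: $\frac{d}{dt}\big(\tfrac12 \tilde\theta_i^2/\gamma_i\big) = \tilde\theta_i\dot{\hat\theta}_i/\gamma_i - \tfrac12 \tilde\theta_i^2 \dot\gamma_i/\gamma_i^2$. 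Summing, the adaptation law \cref{eq:dot_theta_v} is precisely what makes $\sum_i \tilde\theta_i \dot{\hat\theta}_i/\gamma_i$ cancel the cross term $\frac{\partial V_{\hat\theta}}{\partial x}^\top\Delta^\top\tilde\theta$ (since $\dot{\hat\theta}_i = -\gamma_i (\Delta \nabla_x V_{\hat\theta})_i$). What remains is
\begin{equation*}
    \dot W \le -Q_{\hat\theta} + \sum_{i=1}^p \frac{\partial V_{\hat\theta}}{\partial\hat\theta_i}\dot{\hat\theta}_i - \sum_{i=1}^p \frac{\tilde\theta_i^2}{2\gamma_i^2}\dot\gamma_i.
\end{equation*}

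The key step is then to show the last two sums combine to something nonpositive. Substituting the update-law inequality \cref{eq:dot_gamma_V}, $-\dot\gamma_i \ge 2\frac{\gamma_i^2}{\eta_i - \tilde\theta_i^2}\frac{\partial V_{\hat\theta}}{\partial\hat\theta_i}\dot{\hat\theta}_i$, so $-\frac{\tilde\theta_i^2}{2\gamma_i^2}\dot\gamma_i \ge \frac{\tilde\theta_i^2}{\eta_i-\tilde\theta_i^2}\frac{\partial V_{\hat\theta}}{\partial\hat\theta_i}\dot{\hat\theta}_i$. Wait — I need the inequality to point the other way, so here I would be careful about the sign of $\frac{\partial V_{\hat\theta}}{\partial\hat\theta_i}\dot{\hat\theta}_i$; the construction ensures that when this product is positive (destabilizing transient) the gain is driven down, and the factor $\frac{\tilde\theta_i^2}{\eta_i-\tilde\theta_i^2} \in [0,1)$ because $\eta_i > \tilde\vartheta_i \ge |\tilde\theta_i|$ guarantees $0 \le \tilde\theta_i^2 < \eta_i - \tilde\theta_i^2$ is \emph{not} automatic — rather $\eta_i - \tilde\theta_i^2 > \tilde\vartheta_i - \tilde\vartheta_i^2$... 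I would instead verify directly that $\frac{\partial V_{\hat\theta}}{\partial\hat\theta_i}\dot{\hat\theta}_i - \frac{\tilde\theta_i^2}{\eta_i-\tilde\theta_i^2}\frac{\partial V_{\hat\theta}}{\partial\hat\theta_i}\dot{\hat\theta}_i = \frac{\eta_i - 2\tilde\theta_i^2}{\eta_i - \tilde\theta_i^2}\frac{\partial V_{\hat\theta}}{\partial\hat\theta_i}\dot{\hat\theta}_i$, and the denominator is positive while the residual can be absorbed — the cleanest route is to pick $\eta_i$ (or re-examine the stated bound) so that the bracketed coefficient has a definite sign, yielding $\dot W \le -Q_{\hat\theta} \le 0$. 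The \textbf{main obstacle}, and the step I would spend the most care on, is exactly this sign bookkeeping on the combined $\hat\theta$-transient and $\dot\gamma_i$ terms: showing that \cref{eq:dot_gamma_V} makes the residual term non-positive for \emph{all} admissible $\tilde\theta_i$ with $|\tilde\theta_i|\le\tilde\vartheta_i < \eta_i$, regardless of the sign of $\frac{\partial V_{\hat\theta}}{\partial\hat\theta_i}\dot{\hat\theta}_i$, using only admissibility of $\gamma_i$ (positivity, boundedness of $\nabla\gamma_i$, lower bound $c>0$).

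Finally, once $\dot W \le -Q_{\hat\theta}$ is established, I would close the argument in the standard way: $W$ is positive-definite and radially unbounded in $x$ (using radial unboundedness of $V_\theta$ and boundedness of $\tilde\theta_i^2/\gamma_i$ from $\gamma_i \ge c > 0$ and compactness of $\tilde\Theta$), so all signals are bounded and $\hat\theta, \gamma_i$ remain well-defined; then invoke Barbalat's lemma (uniform continuity of $\dot W$, which follows from boundedness of all states and the locally-Lipschitz/smoothness hypotheses) to conclude $Q_{\hat\theta}(x,\hat\theta) \to 0$, and hence $x \to x_d$ by positive-definiteness and radial unboundedness of $Q_\theta$ in $x$. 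I would also note in passing that \cref{eq:dot_gamma_V} is an \emph{inequality}, so any update law — in particular one with added leakage driving $\rho_i \to 0$, i.e.\ $\gamma_i \to \bar\gamma$ — that satisfies it inherits the same stability conclusion, which is the hook for the later leakage-modification result.
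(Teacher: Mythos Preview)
Your overall architecture is right---differentiate $V_{\hat\theta}$ along the dynamics with the certainty-equivalence controller, let the adaptation law \cref{eq:dot_theta_v} cancel the $\tilde\theta$ cross term, then use the gain update to kill the residual $\sum_i \tfrac{\partial V_{\hat\theta}}{\partial\hat\theta_i}\dot{\hat\theta}_i$, and close with Barbalat. But the Lyapunov candidate you chose,
\[
W = V_{\hat\theta} + \tfrac{1}{2}\sum_i \frac{\tilde\theta_i^2}{\gamma_i(\rho_i)},
\]
cannot be made to work with \cref{eq:dot_gamma_V}, and the ``sign bookkeeping'' you flag as the main obstacle is in fact a genuine obstruction, not just a step that needs care. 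The issue is structural: in $\dot W$ the coefficient multiplying $\dot\gamma_i$ is $-\tfrac{\tilde\theta_i^2}{2\gamma_i^2}\le 0$, while \cref{eq:dot_gamma_V} supplies only an \emph{upper} bound on $\dot\gamma_i$. Multiplying an upper bound by a nonpositive number yields a \emph{lower} bound on the product, so you never obtain an upper bound on $\dot W$. Your attempted computation $1 - \tfrac{\tilde\theta_i^2}{\eta_i-\tilde\theta_i^2}=\tfrac{\eta_i-2\tilde\theta_i^2}{\eta_i-\tilde\theta_i^2}$ is what you would get by substituting \cref{eq:dot_gamma_V} with equality and the wrong inequality direction; even then the factor is sign-indefinite (and with the correct sign the residual is $\tfrac{\eta_i}{\eta_i-\tilde\theta_i^2}\,\tfrac{\partial V_{\hat\theta}}{\partial\hat\theta_i}\dot{\hat\theta}_i$, which has the same sign as the transient itself). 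No choice of $\eta_i$ rescues this.

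The missing idea is to \emph{shift} the parameter-error term so the $\dot\gamma_i$ coefficient becomes strictly positive. The paper takes
\[
V_c = V_{\hat\theta}(x,t) + \tfrac{1}{2}\sum_{i=1}^p \frac{\tilde\theta_i^2-\eta_i}{\gamma_i(\rho_i)},
\]
with $\eta_i>\tilde\vartheta_i^2$ so that $\tilde\theta_i^2-\eta_i<0$. Differentiating now produces the coefficient $+\tfrac{\eta_i-\tilde\theta_i^2}{2\gamma_i^2}>0$ in front of $\dot\gamma_i$, and multiplying the upper bound \cref{eq:dot_gamma_V} by this positive factor gives exactly
\[
\tfrac{1}{2}\,\frac{\eta_i-\tilde\theta_i^2}{\gamma_i^2}\,\dot\gamma_i \;\le\; -\,\frac{\partial V_{\hat\theta}}{\partial\hat\theta_i}\dot{\hat\theta}_i,
\]
so each bracket collapses to zero and $\dot V_c\le -Q_{\hat\theta}$ with no residual and no case split on the sign of the transient. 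The price is that $V_c$ is no longer positive-definite---the shifted sum is negative---so it is only a ``Lyapunov-like'' function; but it is bounded below (since $\gamma_i\ge c_i>0$ and $\eta_i$ is finite), which is all Barbalat needs. Your final paragraph then goes through verbatim.
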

\begin{proof}
Consider the Lyapunov-like function
\setcounter{equation}{2}
\begin{equation}
\label{eq:Vc}
    V_c(t) = V_{\hat{\theta}}(x,t) + \frac{1}{2}\sum_{i=1}^p \frac{(\tilde{\theta}_i^2 - \eta_i)}{\gamma_i(\rho_i)},
\end{equation}
where $\tilde{\theta} = \hat{\theta}-\theta$ and $\eta_i > \tilde{\vartheta}_i \geq \tilde{\theta}_i$ {with $\eta_i$ finite}. Differentiating \cref{eq:Vc} yields
\begin{equation*}
\begin{aligned}
    \dot{V}_c(t) = & \,  {\frac{\partial V_{\hat{\theta}}}{\partial t}} + \frac{\partial V_{\hat{\theta}}}{\partial x}^\top[f(x,t) - \Delta(x,t)^\top \theta + {B(x,t)}u] \\
    & + \sum_{i=1}^p\biggl[ \frac{\partial V_{\hat{\theta}}}{\partial \hat{\theta}_i} \, \dot{\hat{\theta}}_i  + \frac{\tilde{\theta}_i}{\gamma_i(\rho_i)}  \dot{\hat{\theta}}_i + \frac{1}{2}\frac{(\eta_i - \tilde{\theta}_i^2)}{\gamma_i(\rho_i)^2} \, \dot{\gamma}_i(\rho_i) \biggr].
\end{aligned}
\end{equation*}
Since $V_{\hat{\theta}}(x,t)$ is an uclf  then
\begin{equation*}
\begin{aligned}
    \dot{V}_c(t) \leq& \, -Q_{\hat{\theta}}(x) + \frac{\partial V_{\hat{\theta}}}{\partial x}^\top \Delta(x,t)^\top \tilde{\theta} \\
    & + \sum_{i=1}^p\biggl[\frac{\partial V_{\hat{\theta}}}{\partial \hat{\theta}_i} \,\dot{\hat{\theta}}_i  + \frac{\tilde{\theta}_i}{\gamma_i(\rho_i)}  \dot{\hat{\theta}}_i + \frac{1}{2}\frac{(\eta_i - \tilde{\theta}_i^2)}{\gamma_i(\rho_i)^2} \, \dot{\gamma}_i(\rho_i) \biggr].
\end{aligned}
\end{equation*}
Noting
\begin{equation*}
    \sum_{i=1}^p \frac{\tilde{\theta}_i}{\gamma_i(\rho_i)}  \dot{\hat{\theta}}_i =  \tilde{\theta}^\top [\mathrm{diag}(\gamma_1(\rho_1),\dots,\gamma_p(\rho_p))]^{-1} \, \dot{\hat{\theta}}
\end{equation*}
then applying \cref{eq:dot_theta_v} yields
\begin{equation*}
\begin{aligned}
    \dot{V}_c(t) & \leq -Q_{\hat{\theta}}(x) + \sum_{i=1}^p\biggl[\frac{\partial V_{\hat{\theta}}}{\partial \hat{\theta}_i} \,\dot{\hat{\theta}}_i + \frac{1}{2} \frac{(\eta_i - \tilde{\theta}_i^2)}{\gamma_i(\rho_i)^2} \, \dot{\gamma}_i(\rho_i) \biggr].
\end{aligned}
\end{equation*}
If $\dot{\gamma}_i(\rho_i)$ is chosen so \cref{eq:dot_gamma_V} is satisfied, then the term in brackets is negative so $\dot{V}_c(t) \leq -Q_{\hat{\theta}}(x) < 0$ if $x \neq x_d$.
Hence, $V_c(t)$ is non-increasing so $V_{\hat{\theta}}(x,t)$ and $\tilde{\theta}$ are bounded since $\gamma_i(\rho_i)$ is lower-bounded and $\eta_i$ is bounded by construction. 
Since $V_{\hat{\theta}}(x,t)$ is radially unbounded then $x$ must also be bounded.
By definition, $Q_{\hat{\theta}}(x)$ is continuously differentiable so $\dot{Q}_{\hat{\theta}}(x)$ is bounded for bounded $x$, $\tilde{\theta}$ and therefore $Q_{\hat{\theta}}(x)$ is uniformly continuous. 
Since $V_c(t)$ is nonincreasing and bounded from below then $\lim_{t\rightarrow \infty} \int_0^t Q_{\hat{\theta}}(x(\tau))\,d\tau  ~{\leq}~ V_c(0) - \lim_{t\rightarrow \infty}V_c(t)$ exists and is finite so by Barbalat's lemma $\lim_{t\rightarrow \infty}  Q_{\hat{\theta}}(x(t)) \rightarrow 0$.
Recalling $Q_{\hat{\theta}}(x) = 0 \iff x = x_d$ then we can conclude $x(t) \rightarrow x_d$ as $t \rightarrow \infty$ as desired. \qedhere
\end{proof}
\begin{remark}
    \label{remark:proj}
    Since the unknown parameters belong to a compact convex set $\Theta$ then one can employ the projection operator $\mathrm{Proj}_\Theta (\cdot)$ to ensure $\hat{\theta} \in \Theta$ without affecting stability \cite{slotine1991applied,ioannou2012robust}.
\end{remark}

\begin{remark}
    \label{remark:rho}
    \Cref{eq:dot_gamma_V} is expressed as an update to $\dot{\gamma}_i(\rho_i)$ to make the dynamic adaptation gain interpretation more obvious.
    For implementation, one should rewrite \cref{eq:dot_gamma_V} as a bound on $\dot{\rho}_i$ via the chain rule.
    Note $\nabla\gamma_i(\rho_i)$ in invertible by \cref{def:adag}.
\end{remark}

Before deriving various implementable forms of the dynamic adaptation gain update law, it is instructive to analyze its general behavior.
For the case where a parameter's adaptation transients is \emph{destabilizing}, i.e., $\tfrac{\partial V_{\hat{\theta}}}{\partial \hat{\theta}_i} \tfrac{d}{dt}\hat{\theta}_i > 0$ for some $i$, we see from \cref{eq:dot_gamma_V} that $\dot{\gamma}_i(\rho_i) < 0$ so the adaptation gain decreases to prevent destabilization. 
In other words, the adaptation rate is slowed for parameters whose adaptation transients could cause instability.
Note that there is no concern of the adaptation gain $\gamma_i(\rho_i)$ becoming negative since $\gamma_i(\rho_i)$ must be an admissible dynamic adaptation gain so, by definition, $\gamma_i(\rho_i)$ is lower-bounded by a positive constant. 
With that said, $\rho_i$ could become a large negative number which does not affect the proof of \cref{thm:dyn_gain} but can have practical implications; this will be discussed in more detail below.
Now considering the case where a parameter's adaptation transients is \emph{stabilizing}, i.e., $\tfrac{\partial V_{\hat{\theta}}}{\partial \hat{\theta}_i} \tfrac{d}{dt}\hat{\theta}_i \leq 0$ for some $i$, we see from \cref{eq:dot_gamma_V} that $\dot{\gamma}_i(\rho_i)$ is simply upper-bounded by a positive quantity. 
Consequently, one can set $\dot{\gamma}_i(\rho_i) = 0$ and still obtain a stable closed-loop system since the transients term is negative thereby leading to the desired stability inequality in the proof of \cref{thm:dyn_gain}. 
Another choice would be to let $\dot{\gamma}_i(\rho_i) > 0$ (without exceeding the inequality \cref{eq:dot_gamma_V}) until ${\gamma}_i(\rho_i) = \bar{\gamma}_i$, i.e., the nominal adaptation gain value, at which point one sets $\dot{\gamma}_i(\rho_i) = 0$.
This leads us to one \emph{implementable} form of the adaptation gain update law.

\begin{corollary}
\label{cor:update_law_1}
    The adaptation gain update law given by
    \begin{equation*}
        \dot{\gamma}_i(\rho_i) = \begin{cases} 
            - \frac{2\bar{\gamma}_i^2}{(\eta_i - \tilde{\vartheta}^2_i)} \frac{\partial V_{\hat{\theta}}}{\partial \hat{\theta}_i} \, \dot{\hat{\theta}}_i ~~ \mathrm{when} ~ \frac{\partial V_{\hat{\theta}}}{\partial \hat{\theta}_i} \, \dot{\hat{\theta}}_i > 0 \\[6pt]
            - \frac{2c_i^2}{\eta_i} \frac{\partial V_{\hat{\theta}}}{\partial \hat{\theta}_i} \, \dot{\hat{\theta}}_i ~~ \mathrm{when}~\frac{\partial V_{\hat{\theta}}}{\partial \hat{\theta}_i} \, \dot{\hat{\theta}}_i \leq 0~\mathrm{and}~\gamma_i(\rho_i) < \bar{\gamma}_i \\[6pt]
            \hphantom{-} 0  ~~ \mathrm{when}~\frac{\partial V_{\hat{\theta}}}{\partial \hat{\theta}_i} \, \dot{\hat{\theta}}_i \leq 0 ~ \mathrm{and} ~\gamma_i(\rho_i) = \bar{\gamma}_i,
        \end{cases}
    \end{equation*}
    for $i=1,\dots,p$ satisfies condition \cref{eq:dot_gamma_V} in \cref{thm:dyn_gain} so $x \rightarrow x_d$ asymptotically as desired.
\end{corollary}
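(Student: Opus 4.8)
The plan is to verify that, in every one of the three branches, the proposed definition of $\dot\gamma_i(\rho_i)$ meets condition \cref{eq:dot_gamma_V}; the convergence $x\to x_d$ then follows verbatim from \cref{thm:dyn_gain}. The right-hand side of \cref{eq:dot_gamma_V} differs from the proposed update law only through the substitutions $\gamma_i(\rho_i)\leftrightarrow\bar\gamma_i$ (or $c_i$) and $\tilde\theta_i^2\leftrightarrow\tilde\vartheta_i^2$ (or $0$), so the whole argument rests on the elementary bounds $c_i\leq\gamma_i(\rho_i)\leq\bar\gamma_i$ and $0<\eta_i-\tilde\vartheta_i^2\leq\eta_i-\tilde\theta_i^2\leq\eta_i$, together with bookkeeping of the sign of $\frac{\partial V_{\hat\theta}}{\partial\hat\theta_i}\dot{\hat\theta}_i$, which selects the active branch.

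First I would establish the invariant $c_i\leq\gamma_i(\rho_i)\leq\bar\gamma_i$. The lower bound is immediate from \cref{def:adag}: an admissible dynamic adaptation gain satisfies $\gamma_i(\rho_i)>c_i$ for every finite $\rho_i$. For the upper bound, recall that $\gamma_i$ is strictly increasing (since $\nabla\gamma_i>0$) with $\gamma_i(0)=\bar\gamma_i$, so it suffices to show $\rho_i(t)\leq0$ whenever $\rho_i(0)=0$. Using $\dot\rho_i=\dot\gamma_i/\nabla\gamma_i(\rho_i)$ with $\nabla\gamma_i>0$, the first branch gives $\dot\rho_i<0$, the third gives $\dot\rho_i=0$, and the second branch---the only one in which $\rho_i$ can increase---is active only while $\gamma_i(\rho_i)<\bar\gamma_i$, i.e. $\rho_i<0$; hence $\{\rho_i\leq0\}$ is forward invariant. (As noted after \cref{thm:dyn_gain}, it is irrelevant for stability that $\rho_i$ may drift to a large negative value.)

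Next I would check the three branches against \cref{eq:dot_gamma_V}. If $\frac{\partial V_{\hat\theta}}{\partial\hat\theta_i}\dot{\hat\theta}_i>0$, substituting the first branch and dividing by $-2\,\frac{\partial V_{\hat\theta}}{\partial\hat\theta_i}\dot{\hat\theta}_i<0$ (which reverses the inequality) reduces the claim to $\frac{\gamma_i(\rho_i)^2}{\eta_i-\tilde\theta_i^2}\leq\frac{\bar\gamma_i^2}{\eta_i-\tilde\vartheta_i^2}$, which follows from $\gamma_i(\rho_i)^2\leq\bar\gamma_i^2$ and $0<\eta_i-\tilde\vartheta_i^2\leq\eta_i-\tilde\theta_i^2$. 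If $\frac{\partial V_{\hat\theta}}{\partial\hat\theta_i}\dot{\hat\theta}_i\leq0$ and $\gamma_i(\rho_i)<\bar\gamma_i$, substituting the second branch reduces the claim to $\frac{c_i^2}{\eta_i}\leq\frac{\gamma_i(\rho_i)^2}{\eta_i-\tilde\theta_i^2}$, which follows from $c_i^2\leq\gamma_i(\rho_i)^2$ and $0<\eta_i-\tilde\theta_i^2\leq\eta_i$. If $\frac{\partial V_{\hat\theta}}{\partial\hat\theta_i}\dot{\hat\theta}_i\leq0$ and $\gamma_i(\rho_i)=\bar\gamma_i$, the third branch gives $\dot\gamma_i=0$ while the right-hand side of \cref{eq:dot_gamma_V} equals $-2\,\frac{\gamma_i(\rho_i)^2}{\eta_i-\tilde\theta_i^2}\frac{\partial V_{\hat\theta}}{\partial\hat\theta_i}\dot{\hat\theta}_i\geq0$, so the inequality holds. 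Since \cref{eq:dot_gamma_V} holds in all cases, \cref{thm:dyn_gain} applies and $x\to x_d$.

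I expect the only genuine subtlety to be well-posedness of the resulting switched closed loop: the update-law vector field is continuous across $\{\frac{\partial V_{\hat\theta}}{\partial\hat\theta_i}\dot{\hat\theta}_i=0\}$ (all three branches vanish there) but not across $\{\gamma_i(\rho_i)=\bar\gamma_i\}$---the same kind of discontinuity introduced by a projection operator (cf. \cref{remark:proj}). I would dispatch this exactly as in projection-based adaptive control: either by working with Filippov solutions, for which \cref{eq:dot_gamma_V}---and hence the estimate $\dot V_c\leq-Q_{\hat\theta}(x)$ of \cref{thm:dyn_gain}---holds for every element of the differential inclusion, or by simply noting that this estimate is preserved on the switching surface; either way the asymptotic conclusion is unaffected. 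Everything else is routine sign-chasing.
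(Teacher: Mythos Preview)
Your proposal is correct and is precisely the argument the paper intends---in fact, the paper gives no formal proof of this corollary, treating it as self-evident from the discussion immediately preceding it (the paragraph explaining that one lowers $\gamma_i$ when the transient is destabilizing and may raise it back toward $\bar\gamma_i$ otherwise). You have simply made that verification explicit: establish the invariant $c_i\leq\gamma_i(\rho_i)\leq\bar\gamma_i$ via forward invariance of $\{\rho_i\leq 0\}$, then combine it with $0<\eta_i-\tilde\vartheta_i^2\leq\eta_i-\tilde\theta_i^2\leq\eta_i$ to check \cref{eq:dot_gamma_V} branch by branch. Your additional remark on well-posedness at the switching surface $\{\gamma_i=\bar\gamma_i\}$ is a fair point the paper does not address, and your proposed resolution (Filippov solutions or the projection-operator analogy of \cref{remark:proj}) is the standard one.
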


The component-wise nature of the adaptation gain update law presented in \cref{cor:update_law_1} systematically checks the adaptation transients of each parameter and adjusts the adaptation gain accordingly to preserve stability.
Algorithmically, the update law cycles through each parameter and updates the adaptation gain on the parameters whose transients are problematic. 
Treating parameters individually rather than as a whole (as in \cite{lopez2021universal}, see the \nameref{sec:appendix}) allows for \emph{targeted} adjustments to problematic parameters which yields less myopic behavior of the controller and better closed-loop performance.
Note this behavior is quite different than that in \cite{lei2006universal} where the gain was only increased to achieve stability.

\begin{remark}
\label{remark:mult}
    \cref{cor:update_law_1} for $\frac{\partial V_{\hat{\theta}}}{\partial \hat{\theta}_i} \, \tfrac{d}{dt}{\hat{\theta}}_i \leq 0$ can be rewritten as
    \begin{equation*}
    \begin{aligned}
        \dot{\gamma}_i(\rho_i) \ &= \ - \frac{2 c^2_i}{\eta_i} \ \frac{\partial}{\partial \hat{\theta}_i} \bigl[\log(h(x)(V_{\hat{\theta}}(x,t) + c))\bigr] \, \dot{\hat{\theta}}_i
    \end{aligned}
    \end{equation*}
    where $h: \mathbb{R}^n \rightarrow \mathbb{R}_{>0}$ and $c \in \mathbb{R}_{>0}$. 
    This form shows that the adaptation gain update is unaffected if an uclf  is scaled by a uniformly positive function $h(x)$.
    This could have implications for safety-critical adaptive control if $h(x)$ were akin to a barrier function.
    If $h(x)$ were also to be upper-bounded then a similar relationship can be obtained for the $\frac{\partial V_{\hat{\theta}}}{\partial \hat{\theta}_i} \, \tfrac{d}{dt}{\hat{\theta}}_i > 0$.
    The gradient term above is similar to the score function $\nabla \log p(x) $ in diffusion-based generative modeling~\cite{song}, where $p(x)$ a probability density known only within a scaling factor (the partition function) which disappears in $\nabla \log p(x) $.
    % This connection will be further explored in future work.
\end{remark}

\subsection{{Leakage Modification: Bounding $\rho$}}
\label{sub:leak}
{The proof of \cref{thm:dyn_gain} requires the individual adaptation gains be lower-bounded.
This is accomplished through appropriate selection of each $\gamma_i(\cdot)$ to meet the conditions of \cref{def:adag}.
It is desirable to ensure the scalar argument $\rho_i$ remains bounded through means beyond simple parameter tuning to prevent numerical instability cause by $|\rho_i|$ becoming too large.
Having the adaptation gains return to their nominal value automatically after the adaptation transients has subsided is also ideal.
We propose to replace the pure integrator $\rho_i$ dynamics with nonlinear first-order dynamics, i.e., a leakage modification, to achieve these desirable properties.
The modified adaptation gain update law takes the form
\begin{equation}
\label{eq:leak}
    \dot{\rho}_i = 2 \frac{\gamma_i(\rho_i)^2}{\nabla \gamma_i(\rho_i)} \big[  - \lambda_i  \, \rho_i + K_i \, w_i(x) \big] \, ,
\end{equation}
where $w_i(x) = \Bigl[\frac{\partial V_{\hat{\theta}}}{\partial \hat{\theta}}^\top \Delta(x,t) \frac{\partial V_{\hat{\theta}}}{\partial x}\Bigr]_i$ with $[\cdot]_i$ the $i$-th component, $\, K_i = \nicefrac{\bar{\gamma}_i}{(\eta_i - \tilde{\vartheta}^2_i)}\, $ if $\, w_i(x) < 0\, $ and $\, K_i = 0$ otherwise, and $\lambda_i \in \mathbb{R}_{>0}$.  
Note $w_i(x) < 0$ corresponds to the transients being \emph{destabilizing}, so the input of \cref{eq:leak} is zero if the transients is stabilizing or has subsided. 
The following lemma establishes important properties of \cref{eq:leak}.
\begin{lemma}
\label{lemma:filter}
    The output of the dynamic adaptation gain update law \cref{eq:leak} remains bounded if the exogenous signal $w_i(x)$ is bounded. Furthermore, the output tends to zero if $w_i(x) \rightarrow 0$.
\end{lemma}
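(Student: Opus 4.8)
The plan is to read \cref{eq:leak} as a scalar nonlinear first-order system $\dot{\rho}_i = g_i(\rho_i)\,[-\lambda_i\rho_i + K_i w_i(x)]$ in which the multiplicative factor $g_i(\rho_i) := 2\gamma_i(\rho_i)^2/\nabla\gamma_i(\rho_i)$ is, by \cref{def:adag}, a continuous and strictly positive function of $\rho_i$ that never changes sign, and in which the forcing satisfies $K_i w_i(x) = \tfrac{\bar{\gamma}_i}{\eta_i - \tilde{\vartheta}_i^2}\min(w_i(x),0) \le 0$. First I would establish boundedness with $W_i = \tfrac12\rho_i^2$: since $\dot{W}_i = g_i(\rho_i)\,[-\lambda_i\rho_i^2 + K_i w_i(x)\,\rho_i]$ and $|K_i w_i(x)| \le \tfrac{\bar{\gamma}_i}{\eta_i - \tilde{\vartheta}_i^2}\bar{w}_i$ whenever $|w_i(x)| \le \bar{w}_i$, one obtains $\dot{W}_i \le g_i(\rho_i)\,|\rho_i|\bigl(-\lambda_i|\rho_i| + \tfrac{\bar{\gamma}_i\bar{w}_i}{\eta_i - \tilde{\vartheta}_i^2}\bigr)$, which is strictly negative as soon as $|\rho_i| > \bar{\gamma}_i\bar{w}_i/\!\bigl(\lambda_i(\eta_i - \tilde{\vartheta}_i^2)\bigr)$ because $g_i > 0$. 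Hence a sublevel set of $W_i$ is forward invariant, so $\rho_i$ stays bounded (and in particular exists for all $t \ge 0$), which proves the first claim. A byproduct worth recording is that $\dot{\rho}_i \le 0$ whenever $\rho_i = 0$, so $\rho_i(t) \le 0$ and thus $\gamma_i(\rho_i(t)) \le \bar{\gamma}_i$ for all $t$ if $\rho_i(0) \le 0$.

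For the second claim, assume $w_i(x) \to 0$; then $w_i$ is in particular bounded, so by the first part $\rho_i(t)$ remains in a compact interval $I$, and on $I$ the continuous, strictly positive function $g_i$ admits uniform bounds $0 < \ubar{g}_i \le g_i(\rho_i(t)) \le \bar{g}_i < \infty$. Substituting these into $\dot{W}_i$ and using $|\rho_i| = \sqrt{2W_i}$ together with $\sqrt{2W_i} \le 1 + W_i$ yields an ISS-type estimate $\dot{W}_i \le -c_1 W_i + c_2\,|w_i(x(t))|\,(1 + W_i)$ with $c_1 = 2\ubar{g}_i\lambda_i > 0$ and $c_2 = \bar{g}_i\,\bar{\gamma}_i/(\eta_i - \tilde{\vartheta}_i^2) > 0$. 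Since $|w_i(x(t))| \to 0$, for all sufficiently large $t$ the coefficient of $W_i$ is at most $-c_1/2 < 0$ and the additive term $c_2|w_i(x(t))|$ tends to zero, so the comparison lemma gives $W_i(t) \to 0$, i.e.\ $\rho_i(t) \to 0$; continuity of $\gamma_i$ with $\gamma_i(0) = \bar{\gamma}_i$ then shows the adaptation gain returns to its nominal value, completing the proof.

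I expect the main obstacle to be that $g_i(\rho_i)$ is not uniformly bounded on all of $\mathbb{R}$ --- e.g.\ for the exponential admissible gain it grows without bound as $\rho_i \to -\infty$ --- so \cref{eq:leak} cannot simply be treated as a perturbed exponentially stable linear filter and no global Lipschitz/ISS estimate is available off the bat. The fix is exactly the two-stage structure above: boundedness of $\rho_i$ uses only the \emph{sign} of $g_i$ (no upper bound needed), and only afterward does continuity of $g_i$ on the resulting invariant compact set supply the uniform constants $\ubar{g}_i, \bar{g}_i$ required for convergence. The one remaining technicality --- the non-Lipschitz $\sqrt{W_i}$ term at the origin --- is routine, handled by the comparison estimate above or by working directly with $|\rho_i|$ and its upper right-hand Dini derivative.
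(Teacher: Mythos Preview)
Your proposal is correct and takes a genuinely different route from the paper's own proof. The paper argues via a \emph{comparison system} $\dot{z} = a(z)[-\lambda z + \bar{w}]$ with $a(z) := 2\gamma(z)^2/\nabla\gamma(z)$ and $\bar{w}$ the supremum of $|Kw|$ on a time window, then analyzes its \emph{virtual dynamics} $\dot{y} = -\lambda a(z) y + a(z)\bar{w}$ using partial contraction \cite{wang2005partial}: this linear time-varying system is solved explicitly as $y(t) = y(t_0)e^{-\alpha(t)} + \tfrac{1}{\lambda}(1-e^{-\alpha(t)})\bar{w}$ with $\alpha(t) = \lambda\int_{t_0}^t a(\rho(\tau))\,d\tau$, from which boundedness and convergence follow directly, and $|\rho| \le z$ transfers the conclusion back. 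Your two-stage Lyapunov/ISS argument with $W_i = \tfrac{1}{2}\rho_i^2$ is more elementary and entirely self-contained --- no contraction machinery or explicit integration --- and your separation of concerns (use only the \emph{sign} of $g_i$ for the invariant-set step, then extract uniform bounds $\ubar{g}_i,\bar{g}_i$ on the resulting compact set for the ISS estimate) is exactly the right way to handle the lack of global bounds on $g_i$, a subtlety the paper's proof sidesteps by asserting $a(z)$ is ``uniformly strictly positive.'' What the paper's approach buys is a cleaner explicit bound, roughly $|\rho| \lesssim |\rho(t_0)| + \bar{w}/\lambda$, whereas yours yields the same qualitative conclusion through the comparison lemma. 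Your side remark that $\rho_i(t) \le 0$ whenever $\rho_i(0) \le 0$ is also on target; the paper uses precisely this fact in the proof of \cref{thm:leak}.
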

The proof can be found in the \nameref{sec:appendix}. We now show \cref{eq:leak} when combined with \cref{eq:dot_theta_v} yields a stable closed-loop system.
\begin{theorem}
\label{thm:leak}
    Consider the uncertain system \cref{eq:dyn} with $x_d$ being the desired equilibrium point.
    If an uclf $\,V_\theta(x, t)\,$ exists, then $x \rightarrow x_d$ asymptotically with the adaptation law \cref{eq:dot_theta_v} and dynamic adaptation gain update law \cref{eq:leak}. 
    Furthermore, $\rho_i \rightarrow 0$ and in turn $\gamma_i(\rho_i) \rightarrow \bar{\gamma}_i$ asymptotically for each $i=1,\dots,p$.  
\end{theorem}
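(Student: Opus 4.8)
The plan is to re‑run the Lyapunov argument behind \cref{thm:dyn_gain}, the only structural change being that the filter state $\rho_i$ is now produced by the nonlinear first‑order law \cref{eq:leak} rather than being free to satisfy \cref{eq:dot_gamma_V} pointwise, so an extra, \emph{non‑sign‑compatible} term appears and must be dominated. I would work with the composite function $\mathcal V(t) = V_c(t) + \sum_{i=1}^{p}\Phi_i(\rho_i)$, where $V_c$ is \cref{eq:Vc} and each $\Phi_i\ge 0$ is a storage function for the $\rho_i$‑subsystem (e.g.\ $\Phi_i(\rho_i)=\tfrac12 a_i\rho_i^2$, $a_i>0$, or the storage function used to prove \cref{lemma:filter}). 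Like $V_c$, $\mathcal V$ is radially unbounded in $x$ and bounded below, since every admissible dynamic adaptation gain is uniformly lower bounded and each $\eta_i$ is finite. I would first record that $\{\rho_i\le 0\}$ is forward invariant: when $w_i(x)\ge 0$ then $K_i=0$ and $\dot\rho_i=-2\tfrac{\gamma_i^2}{\nabla\gamma_i}\lambda_i\rho_i$, which vanishes at $\rho_i=0$; and when $w_i(x)<0$ the input $K_iw_i(x)$ is strictly negative. Hence $\gamma_i(\rho_i)\le\bar\gamma_i$ along solutions, which is exactly what makes the $K_i$‑term in \cref{eq:leak} strong enough.

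Differentiating $\mathcal V$, invoking the uclf inequality at $\theta=\hat\theta$ and substituting \cref{eq:dot_theta_v} as in \cref{thm:dyn_gain} (so that $\tilde\theta^\top[\mathrm{diag}(\gamma_i(\rho_i))]^{-1}\dot{\hat\theta}$ cancels the sign‑indefinite $\tfrac{\partial V_{\hat\theta}}{\partial x}^\top\Delta^\top\tilde\theta$ term), and then substituting \cref{eq:leak} (equivalently $\dot\gamma_i=\nabla\gamma_i\dot\rho_i=2\gamma_i^2[-\lambda_i\rho_i+K_iw_i(x)]$), I expect
\[ \dot{\mathcal V}\le -Q_{\hat\theta}(x) + \sum_{i=1}^{p}\Big[-\gamma_i(\rho_i)w_i(x) + (\eta_i-\tilde\theta_i^2)\big(K_iw_i(x)-\lambda_i\rho_i\big) + \dot\Phi_i\Big]. \]
In the \emph{destabilizing} case $w_i(x)<0$ one has $K_i=\tfrac{\bar\gamma_i}{\eta_i-\tilde\vartheta_i^2}$, and since $\gamma_i(\rho_i)\le\bar\gamma_i$ and $\tilde\theta_i^2\le\tilde\vartheta_i^2<\eta_i$ the first two bracket terms combine to $w_i(x)\big(-\gamma_i+(\eta_i-\tilde\theta_i^2)\tfrac{\bar\gamma_i}{\eta_i-\tilde\vartheta_i^2}\big)\le 0$, mirroring the bound used in \cref{cor:update_law_1}; in the \emph{stabilizing} case $w_i(x)\ge 0$, $K_i=0$ and $-\gamma_i w_i\le 0$. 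What remains is the leakage residual $-(\eta_i-\tilde\theta_i^2)\lambda_i\rho_i=(\eta_i-\tilde\theta_i^2)\lambda_i|\rho_i|\ge 0$ together with $\dot\Phi_i$, whose negative‑definite piece $-2a_i\lambda_i\tfrac{\gamma_i^2}{\nabla\gamma_i}\rho_i^2$ (with $\tfrac{\gamma_i^2}{\nabla\gamma_i}$ bounded below on $\{\rho_i\le 0\}$ by \cref{def:adag}) is used — together with the ISS estimate for $\rho_i$ in \cref{lemma:filter} — to collapse the sum into a perturbation that is bounded on bounded sets and vanishes with $\|w(x)\|$. From there the endgame is that of \cref{thm:dyn_gain}: $\mathcal V$ bounds $V_{\hat\theta}$, $\tilde\theta$, and (radial unboundedness) $x$; \cref{lemma:filter} then bounds $\rho$ and all closed‑loop signals; $\dot Q_{\hat\theta}$ is bounded so $Q_{\hat\theta}$ is uniformly continuous; $\int_0^\infty Q_{\hat\theta}(x)\,d\tau$ is finite; and Barbalat's lemma gives $Q_{\hat\theta}(x(t))\to 0$, i.e.\ $x(t)\to x_d$.

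For the second claim, once $x(t)\to x_d$ the positive‑definiteness of the uclf in $x$ gives $\tfrac{\partial V_{\hat\theta}}{\partial x}\to 0$; since $\hat\theta$ stays in the compact set $\Theta$ and $\Delta(\cdot)$ is bounded along the (bounded) trajectory, $w_i(x(t))\to 0$ for each $i$, so the input of \cref{eq:leak} tends to zero and the second part of \cref{lemma:filter} yields $\rho_i(t)\to 0$; continuity of the admissible dynamic adaptation gain then gives $\gamma_i(\rho_i(t))\to\gamma_i(0)=\bar\gamma_i$. The step I expect to be the main obstacle is the one flagged above: dominating the restoration term $-\lambda_i\rho_i$, which — unlike the design \cref{eq:dot_gamma_V} — is not sign‑compatible with the stability inequality, and, relatedly, breaking the apparent circularity between ``$x$ bounded $\Rightarrow\rho$ bounded'' (through \cref{lemma:filter}) and ``$\rho$ bounded $\Rightarrow\mathcal V$ bounded'', which I would resolve using the unconditional lower bound on $\gamma_i$ together with the invariance $\rho_i\le 0$.
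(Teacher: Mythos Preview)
Your overall architecture is right (invariance $\rho_i\le 0$, the $K_i$ case split, Barbalat, then \cref{lemma:filter} for $\rho_i\to 0$), but the step you yourself flag as the main obstacle is a genuine gap, and your proposed fix does not close it. After the $K_i$ analysis the residual is the \emph{first-order} term $(\eta_i-\tilde\theta_i^2)\lambda_i|\rho_i|\ge 0$. A quadratic storage $\Phi_i=\tfrac12 a_i\rho_i^2$ only contributes a negative term of order $\rho_i^2$ in $\dot\Phi_i$; no choice of $a_i>0$ makes $-c\,\rho_i^2$ dominate $c'\,|\rho_i|$ near $\rho_i=0$, so $\dot{\mathcal V}\le -Q_{\hat\theta}$ fails in a neighborhood of $\rho_i=0$. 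In addition, $\dot\Phi_i$ produces the cross term $2a_i\tfrac{\gamma_i^2}{\nabla\gamma_i}\rho_i K_i w_i$, which is sign-indefinite and (for, e.g., the exponential $\gamma_i$) has an unbounded coefficient as $\rho_i\to-\infty$; you do not account for it. Your fallback to ``collapse the sum into a perturbation that vanishes with $\|w(x)\|$'' via \cref{lemma:filter} is not a pointwise bound: the ISS estimate there controls $|\rho_i|$ by the \emph{supremum} of $w_i$ over an interval, not its instantaneous value, so it cannot be fed back into a differential inequality for $\dot{\mathcal V}$ without the very circularity you identify at the end.

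The paper sidesteps all of this with a different augmentation: it adds $\sum_{i}\eta_i\lambda_i\int_t^{T}|\rho_i(\tau)|\,d\tau$ (finite for $T$ large but finite, in the spirit of \cite{kalman1960control,luenberger1979introduction}) to $V_c$. Differentiating this term gives exactly $-\eta_i\lambda_i|\rho_i|$ with \emph{no} cross terms, and since $\eta_i\ge \eta_i-\tilde\theta_i^2$ and $\rho_i\le 0$ this dominates the leakage residual outright, yielding $\dot V_c\le -Q_{\hat\theta}$ directly. The remaining steps then proceed as in your outline.
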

\begin{proof}
    Consider the Lyapunov-like function
    \begin{equation*}
        V_c(t) = V_{\hat{\theta}}(x,t) + \sum_{i=1}^p \eta_i \, \lambda_i \int\limits_t^T |\rho_i(\tau)|\, d\tau + \frac{1}{2}\sum_{i=1}^p \frac{(\tilde{\theta}_i^2 - \eta_i)}{\gamma_i(\rho_i)},
    \end{equation*}
    where $\eta_i$ and $\lambda_i$ are defined as before and $T$ is sufficiently large \cite{kalman1960control,luenberger1979introduction} but finite to ensure the integral is finite. 
    Differentiating and applying \cref{def:uclf,eq:dot_theta_v,eq:leak},
    \begin{equation*}
        \begin{aligned}
            \dot{V}_c(t) & \leq  - Q_{\hat{\theta}}(x) - \sum_{i=1}^p \eta_i \, \lambda_i \, |\rho_i|  \\
            & \hphantom{\leq} + \sum_{i=1}^p \left[ \frac{\partial V_{\hat{\theta}}}{\partial \hat{\theta}_i} \dot{\hat{\theta}}_i + (\eta_i - \tilde{\theta}_i^2) [-\lambda_i\, \rho_i + K_i \, w_i(x) ] \right] \\
            &\leq - Q_{\hat{\theta}}(x) - \sum_{i=1}^p \left[ \eta_i \, \lambda_i \, |\rho_i| + (\eta_i - \tilde{\theta}_i^2) \, \lambda_i\, \rho_i  \right],
        \end{aligned}
    \end{equation*}
    where the second inequality holds by choice of $K_i$.
    Also by choice of $K_i$, \cref{eq:leak} is only driven by an input that is either negative or zero, so $\rho_i(t) \leq 0$ for all $t\geq 0$ since $\rho_i(0) = 0$.
    Hence, the term in brackets must be negative which yields $\dot{V}_c(t) \leq - Q_{\hat{\theta}}(x)$ so $V_c(t)$ is nonincreasing and in turn $x$ and $\tilde{\theta}$ are bounded. 
    Using the same arguments as in \cref{thm:dyn_gain} and noting $V_c(t)$ is still lower-bounded, $x(t) \rightarrow x_d$ as $t \rightarrow \infty$ via Barbalat's lemma.
    Since $\tfrac{\partial V_{\hat{\theta}}}{\partial x} \rightarrow 0 \iff x \rightarrow x_d$ by construction, then each $w_i(x) \rightarrow 0$ as $x \rightarrow x_d$.
    Hence, by \cref{lemma:filter}, $\rho_i \rightarrow 0$ and in turn $\gamma_i(\rho_i) \rightarrow \bar{\gamma}_i$ for $i=1,\dots,p$
\end{proof} }

\subsection{Matched and Unmatched Uncertainties}
A particularly useful property of the dynamic adaptation gains method is the ability to treat matched and unmatched uncertainties separately. 
{This is in stark contrast to the approach taken in \cite{lopez2021universal} where all adaptation gains were adjusted to cancel adaptation transients.}
The separability inherent to the current method is indicative of the more natural formalism of treating adaptation transients on an individual basis rather than as a whole. 
The following theorem solidifies this point.
\begin{theorem}
\label{thm:matched}
    Assume system \cref{eq:dyn} can be rewritten as
    \begin{equation}
    \label{eq:dyn2}
        \dot{x} = f(x,t) - \Delta(x,t)^\top \theta + B(x,t) [u - \Psi(x,t)^\top \phi]
    \end{equation}
    where $\phi \in \Phi \subset \mathbb{R}^q$ are the matched parameters with known regression vectors $\Psi:\mathbb{R}^n \times \mathbb{R}_{\geq 0} \rightarrow \mathbb{R}^{q \times n}$.
    Let $x_d$ denote the desired equilibrium point.
    If an uclf $\,V_\theta(x, t)\,$ exists with corresponding control law $u_{\theta}$, then $x \rightarrow x_d$ asymptotically with the controller $\kappa = u_{\hat{\theta}} + \Psi(x,t)^\top \hat{\phi}$ and adaptation laws
\begin{equation}
\label{eq:matched-adapt}
    \begin{aligned}
        \dot{\hat{\phi}} &= - \, \Gamma \, B(x,t)^\top \Psi(x,t) \, \frac{\partial V_{\hat{\theta}}}{\partial x} \\
        \dot{\hat{\theta}} &= -  \, \mathrm{diag}(\gamma_1(\rho_1), \dots, \gamma_p(\rho_p)) \,\Delta(x,t) \, \frac{\partial V_{\hat{\theta}}}{\partial x}
    \end{aligned}
\end{equation}
where $\Gamma$ is a constant symmetric positive-definite matrix and each $\gamma_i(\cdot)$ is an admissible dynamic adaptation gain.
\end{theorem}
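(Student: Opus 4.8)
The plan is to superpose the classical Lyapunov redesign for matched uncertainties onto the argument of \cref{thm:dyn_gain}. I would take the composite Lyapunov-like function
\begin{equation*}
    V_c(t) = V_{\hat{\theta}}(x,t) + \tfrac{1}{2}\tilde{\phi}^\top \Gamma^{-1}\tilde{\phi} + \tfrac{1}{2}\sum_{i=1}^p \frac{\tilde{\theta}_i^2 - \eta_i}{\gamma_i(\rho_i)},
\end{equation*}
with $\tilde{\phi} = \hat{\phi}-\phi$, $\tilde{\theta} = \hat{\theta}-\theta$, and $\eta_i > \tilde{\vartheta}_i$ finite as in \cref{thm:dyn_gain} --- i.e., the $V_c$ of \cref{thm:dyn_gain} plus the standard quadratic-in-$\tilde\phi$ term.

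First I would differentiate $V_c$ along the closed loop. Substituting $\kappa = u_{\hat{\theta}} + \Psi(x,t)^\top\hat{\phi}$ into \cref{eq:dyn2} gives $B[\kappa - \Psi^\top\phi] = B u_{\hat{\theta}} + B\Psi^\top\tilde{\phi}$, and writing $-\Delta^\top\theta = -\Delta^\top\hat{\theta} + \Delta^\top\tilde{\theta}$, the uclf inequality (\cref{def:uclf}) evaluated at the control $u_{\hat{\theta}}$ bounds $\frac{\partial V_{\hat{\theta}}}{\partial t} + \frac{\partial V_{\hat{\theta}}}{\partial x}^\top[f - \Delta^\top\hat{\theta} + B u_{\hat{\theta}}]$ by $-Q_{\hat{\theta}}(x)$. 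What remains falls into three groups: (i) the unmatched cross term $\frac{\partial V_{\hat{\theta}}}{\partial x}^\top\Delta^\top\tilde{\theta}$ together with $\sum_i \tilde{\theta}_i\,\dot{\hat{\theta}}_i/\gamma_i(\rho_i)$, which cancel identically upon substituting the $\hat{\theta}$-law of \cref{eq:matched-adapt} --- verbatim as in \cref{thm:dyn_gain}; (ii) the matched cross term $\frac{\partial V_{\hat{\theta}}}{\partial x}^\top B\Psi(x,t)^\top\tilde{\phi}$ together with $\tilde{\phi}^\top\Gamma^{-1}\dot{\hat{\phi}}$, which cancel identically upon substituting the gradient-type $\hat{\phi}$-law of \cref{eq:matched-adapt}; and (iii) the transient terms $\sum_i\bigl[\frac{\partial V_{\hat{\theta}}}{\partial\hat{\theta}_i}\dot{\hat{\theta}}_i + \tfrac{1}{2}(\eta_i - \tilde{\theta}_i^2)\gamma_i(\rho_i)^{-2}\dot{\gamma}_i(\rho_i)\bigr]$, which are rendered nonpositive by choosing each $\dot{\gamma}_i$ to satisfy \cref{eq:dot_gamma_V} (or, equivalently, by the leakage law \cref{eq:leak}). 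The structural point I would emphasize is that $V_{\hat{\theta}}$ does not depend on $\hat{\phi}$, so the matched estimate produces no analogue of the destabilizing term $\frac{\partial V_{\hat{\theta}}}{\partial\hat{\theta}_i}\dot{\hat{\theta}}_i$; this is exactly why a constant gain $\Gamma$ suffices for the matched parameters and why the matched and unmatched adaptation decouple.

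Collecting these, $\dot{V}_c(t) \leq -Q_{\hat{\theta}}(x) < 0$ for $x \neq x_d$. From here the endgame is the same as in \cref{thm:dyn_gain}: $V_c$ is nonincreasing and bounded below (each $\gamma_i(\rho_i)$ is lower-bounded by a positive constant, each $\eta_i$ is finite, and $\Gamma \succ 0$), so $V_{\hat{\theta}}$, $\tilde{\theta}$, and $\tilde{\phi}$ are bounded; radial unboundedness of $V_{\hat{\theta}}$ in $x$ gives boundedness of $x$; continuous differentiability of $Q_{\hat{\theta}}$ together with boundedness of the closed-loop signals makes $t\mapsto Q_{\hat{\theta}}(x(t))$ uniformly continuous; and $\lim_{t\to\infty}\int_0^t Q_{\hat{\theta}}(x(\tau))\,d\tau \leq V_c(0) - \lim_{t\to\infty}V_c(t)$ is finite, so Barbalat's lemma gives $Q_{\hat{\theta}}(x(t)) \to 0$, i.e.\ $x(t)\to x_d$. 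If desired, $\mathrm{Proj}_\Phi$ on $\hat{\phi}$ and $\mathrm{Proj}_\Theta$ on $\hat{\theta}$ may be appended as in \cref{remark:proj} without changing the sign of $\dot{V}_c$.

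I do not expect a genuinely hard step: the proof is essentially a superposition of the textbook matched-uncertainty redesign and \cref{thm:dyn_gain}. The one place that needs care is the bookkeeping of the cross terms, so that each of the two laws in \cref{eq:matched-adapt} cancels precisely the term it is meant to, and the explicit observation that no $\partial V_{\hat{\theta}}/\partial\hat{\phi}$ term ever appears (so no dynamic gain is needed on the matched channel). If one additionally wants $\gamma_i(\rho_i)\to\bar{\gamma}_i$ --- and, if a leakage term were also placed on $\hat{\phi}$, $\hat{\phi}$ to stop drifting --- one appends the leakage dynamics \cref{eq:leak} and invokes \cref{lemma:filter} exactly as in the proof of \cref{thm:leak}; that is the only ingredient beyond \cref{thm:dyn_gain}, and it is already established.
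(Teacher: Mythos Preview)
Your proposal is correct and follows essentially the same route as the paper: the same composite Lyapunov-like function $V_c = V_{\hat{\theta}} + \tfrac{1}{2}\tilde{\phi}^\top\Gamma^{-1}\tilde{\phi} + \tfrac{1}{2}\sum_i(\tilde{\theta}_i^2-\eta_i)/\gamma_i(\rho_i)$, the same three-way cancellation after substituting $\kappa$ and the two adaptation laws, and the same Barbalat endgame borrowed from \cref{thm:dyn_gain}. Your explicit observation that $V_{\hat{\theta}}$ is independent of $\hat{\phi}$---hence no $\partial V_{\hat{\theta}}/\partial\hat{\phi}$ transient term arises and a constant $\Gamma$ suffices---is a nice clarifying remark that the paper leaves implicit.
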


\begin{proof}
    Consider the new Lyapunov-like function 
    \begin{equation*}
        V_c(t) = V_{\hat{\theta}}(x,t) + \frac{1}{2}\tilde{\phi}^\top \Gamma^{-1} \tilde{\phi} + \frac{1}{2}\sum_{i=1}^p \frac{(\tilde{\theta}_i^2-\eta_i)}{\gamma_i(\rho_i)},
    \end{equation*}
    where $\tilde{\phi} \triangleq \hat{\phi}-\phi$ and $\tilde{\theta},~\eta_i$ are defined as before. Differentiating $V_c(t)$ along \cref{eq:dyn2} and substituting $\kappa$ yields
    \begin{equation*}
        \begin{aligned}
            \dot{V}_c(t) \leq & \, -Q_{\hat{\theta}}(x) + \frac{\partial V_{\hat{\theta}}}{\partial x}^\top \Delta(x,t)^\top \tilde{\theta} \\
            & + \frac{\partial V_{\hat{\theta}}}{\partial x}^\top \Psi(x,t)^\top {B(x,t)} \tilde{\phi} + \tilde{\phi}^\top \Gamma^{-1} \, \dot{\phi}_i \\
            & + \sum_{i=1}^p\biggl[\frac{\partial V_{\hat{\theta}}}{\partial \hat{\theta}_i}\,\dot{\hat{\theta}}_i  + \frac{\tilde{\theta}_i}{\gamma_i(\rho_i)} \dot{\hat{\theta}}_i + \frac{1}{2}\frac{(\eta_i - \tilde{\theta}_i^2)}{\gamma_i(\rho_i)^2} \, \dot{\gamma}_i(\rho_i) \biggr].
        \end{aligned}
    \end{equation*}
    Substituting in \cref{eq:matched-adapt,eq:dot_gamma_V} yields $\dot{V}_c(t) \leq - Q_{\hat{\theta}}(x)$ so $V_c(t)$ is nonincreasing. 
    Similar to \cref{thm:dyn_gain}, we can conclude that $\lim_{t\rightarrow\infty} Q_{\hat{\theta}}(x(t)) \rightarrow 0 \implies x(t) \rightarrow x_d$ as $t\rightarrow\infty$.
\end{proof}

\cref{thm:matched} can be immediately extended to use other well-known direct matched adaptation laws, e.g., those based on sliding variables \cite{slotine1991applied}, nonlinear damping \cite{krstic1995nonlinear}, or any other suitable adaptation law.
This further highlights the versatility of individual dynamic adaptation gains compared to \cite{lopez2021universal}.

\subsection{Composite Adaptation}
\label{sub:comp}
Parameter adaptation transients can be markedly improved by combining direct and indirect adaptation schemes to form a composite adaptation law \cite{slotine1989composite}.
The following proposition shows composite adaptation with dynamic adaptation gains also yields a stable closed-loop system.
\begin{proposition}
\label{prop:composite}
    Assume a signal $\varepsilon_{\hat{\theta}} = W(x,t)^\top\tilde{\theta}$ for some matrix $W(x,t)$ is available via measurement or computation. 
    If an uclf $\,V_\theta(x, t)\,$ exists, then $x \rightarrow x_d$ asymptotically with the composite adaptation law
\begin{equation*}
    % \label{eq:adapt-comp}
    \dot{\hat{\theta}} = - \mathrm{diag}(\gamma_1(\rho_1), \dots, \gamma_p(\rho_p)) \Bigl(\Delta(x,t) \frac{\partial V_{\hat{\theta}}}{\partial x} + \beta \, W(x,t) \, \varepsilon_{\hat{\theta}} \Bigr)
\end{equation*}
where $\beta \in \mathbb{R}_{>0}$ and each $\gamma_i$ is an admissible dynamic adaptation gain whose update law satisfies 
\cref{eq:dot_gamma_V}.
\end{proposition}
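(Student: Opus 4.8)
The plan is to mirror the proof of \cref{thm:dyn_gain} essentially verbatim, the only new ingredient being the prediction-error term contributed by the composite law. First I would take the same Lyapunov-like function $V_c(t) = V_{\hat{\theta}}(x,t) + \tfrac{1}{2}\sum_{i=1}^p (\tilde{\theta}_i^2 - \eta_i)/\gamma_i(\rho_i)$ with $\eta_i > \tilde{\vartheta}_i \ge \tilde{\theta}_i$ finite, and differentiate it along \cref{eq:dyn}. Applying the uclf inequality of \cref{def:uclf} to the $x$-dynamics gives $\dot V_c \le -Q_{\hat{\theta}}(x) + \frac{\partial V_{\hat{\theta}}}{\partial x}^\top \Delta^\top \tilde{\theta} + \sum_{i=1}^p\bigl[\frac{\partial V_{\hat{\theta}}}{\partial \hat{\theta}_i}\dot{\hat{\theta}}_i + \frac{\tilde{\theta}_i}{\gamma_i(\rho_i)}\dot{\hat{\theta}}_i + \tfrac{1}{2}\frac{(\eta_i - \tilde{\theta}_i^2)}{\gamma_i(\rho_i)^2}\dot{\gamma}_i(\rho_i)\bigr]$, exactly the expression appearing in \cref{thm:dyn_gain}.

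Second, I would substitute the composite adaptation law. Because the $\mathrm{diag}(\gamma_i(\rho_i))$ prefactor cancels componentwise with $1/\gamma_i(\rho_i)$, one gets $\sum_{i=1}^p \frac{\tilde{\theta}_i}{\gamma_i(\rho_i)}\dot{\hat{\theta}}_i = -\tilde{\theta}^\top\bigl(\Delta(x,t)\frac{\partial V_{\hat{\theta}}}{\partial x} + \beta\,W(x,t)\,\varepsilon_{\hat{\theta}}\bigr)$. The first piece $-\tilde{\theta}^\top\Delta\frac{\partial V_{\hat{\theta}}}{\partial x} = -\frac{\partial V_{\hat{\theta}}}{\partial x}^\top\Delta^\top\tilde{\theta}$ exactly cancels the sign-indefinite unmatched cross term, while the second piece becomes $-\beta\,\tilde{\theta}^\top W\varepsilon_{\hat{\theta}} = -\beta\,\tilde{\theta}^\top W W^\top\tilde{\theta} = -\beta\|\varepsilon_{\hat{\theta}}\|^2 \le 0$ using $\varepsilon_{\hat{\theta}} = W^\top\tilde{\theta}$. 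This leaves $\dot V_c \le -Q_{\hat{\theta}}(x) - \beta\|\varepsilon_{\hat{\theta}}\|^2 + \sum_{i=1}^p\bigl[\frac{\partial V_{\hat{\theta}}}{\partial \hat{\theta}_i}\dot{\hat{\theta}}_i + \tfrac{1}{2}\frac{(\eta_i - \tilde{\theta}_i^2)}{\gamma_i(\rho_i)^2}\dot{\gamma}_i(\rho_i)\bigr]$. Invoking the hypothesis that each $\gamma_i$ satisfies \cref{eq:dot_gamma_V}, where now $\dot{\hat{\theta}}_i$ denotes the $i$-th component of the composite law and $\eta_i - \tilde{\theta}_i^2 > 0$, makes the bracketed sum nonpositive, so $\dot V_c(t) \le -Q_{\hat{\theta}}(x) - \beta\|\varepsilon_{\hat{\theta}}\|^2 \le -Q_{\hat{\theta}}(x)$, which is strictly negative for $x \neq x_d$.

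Third, I would close with the identical Barbalat argument used in \cref{thm:dyn_gain}: $V_c$ is nonincreasing and bounded below (each $\gamma_i(\rho_i)$ is lower-bounded by a positive constant and $\eta_i$ is finite), hence $V_{\hat{\theta}}$, $\tilde{\theta}$, and by radial unboundedness $x$ are bounded; $Q_{\hat{\theta}}(x)$ is then uniformly continuous, $\int_0^\infty Q_{\hat{\theta}}(x(\tau))\,d\tau \le V_c(0) - \lim_{t\to\infty}V_c(t)$ exists and is finite, so $Q_{\hat{\theta}}(x(t)) \to 0$; since $Q_{\hat{\theta}}(x) = 0 \iff x = x_d$, we conclude $x(t) \to x_d$.

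I do not anticipate a genuine obstacle: the extra $-\beta\|\varepsilon_{\hat{\theta}}\|^2 \le 0$ term is obtained "for free" and never interferes with the cancellations of \cref{thm:dyn_gain}. The only points needing care are bookkeeping ones -- verifying that the $\mathrm{diag}(\gamma_i(\rho_i))$ premultiplying the combined regressor still yields the exact cancellation of the unmatched cross term (it does, because the $\gamma_i$ cancel componentwise), and confirming that \cref{eq:dot_gamma_V} remains the correct sufficient condition when $\dot{\hat{\theta}}_i$ carries the additional $\beta[W\varepsilon_{\hat{\theta}}]_i$ contribution (it does, since \cref{eq:dot_gamma_V} constrains $\dot{\gamma}_i$ in terms of whatever $\dot{\hat{\theta}}_i$ happens to be). One may optionally remark that under persistency of excitation the term $-\beta\|\varepsilon_{\hat{\theta}}\|^2$ additionally drives $\tilde{\theta} \to 0$ and improves transients, though this is not needed for the stated claim.
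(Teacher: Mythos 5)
Your proposal is correct and follows essentially the same route as the paper: the same Lyapunov-like function from \cref{thm:dyn_gain}, the componentwise cancellation of the unmatched cross term by the $\mathrm{diag}(\gamma_i(\rho_i))$ prefactor, the observation that the composite term contributes $-\beta\,\tilde{\theta}^\top W W^\top \tilde{\theta} \leq 0$ for free, the use of \cref{eq:dot_gamma_V} to kill the transient bracket, and the concluding Barbalat argument. The paper's proof is merely a terser version of yours.
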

\begin{proof}
    Using the Lyapunov-like function from \cref{thm:dyn_gain} and applying the composite adaptation law with \cref{eq:dot_gamma_V} yields $\dot{V}_c(t) \leq -Q_{\hat{\theta}}(x) - \beta \, \tilde{\theta}^\top W(x,t) \, \varepsilon_{\hat{\theta}} = -Q_{\hat{\theta}}(x) - \beta \, \tilde{\theta}^\top W(x,t) W(x,t)^\top \tilde{\theta} \implies \dot{V}_c(t) \leq -Q_{\hat{\theta}}(x)$ so $V_c(t)$ is nonincreasing.
    Similar to \cref{thm:dyn_gain}, we can conclude that $\lim_{t\rightarrow\infty} Q_{\hat{\theta}}(x(t)) \rightarrow 0 \implies x(t) \rightarrow x_d$ as $t\rightarrow\infty$.
\end{proof}

%%%%%%%%%%%%%%%%%%%%%%%%%%%%%%%%%%%%%%%%%%%%%%%%%%%%%%%%%%%%%%%%%%%%
\section{Simulation Experiments}

The developed method was tested on the system
\begin{equation}
\label{eq:ex_dyn_2}
\left[ \begin{array}{c} \dot{x}_1 \\ \dot{x}_2 \\ \dot{x}_3 \end{array} \right] = \left[ \begin{array}{c}  x_3 - \theta_1 x_1 \\ - x_2 -\theta_2 x^2_1  \\ \mathrm{tanh}(x_2) - \theta_3 x_3 - \theta_4 x_1^2 \end{array} \right] + \left[ \begin{array}{c} 0 \\ 0 \\ 1 \end{array} \right] u,
\end{equation}
with state $x = [x_1,~x_2,~x_3]^\top$, unknown parameters $\theta = [\theta_1,~\theta_2,~\theta_3,~\theta_4]^\top$ where $\theta_{1},\,\theta_2$ are unmatched, and $x_d$ being the origin.
The system is not feedback linearizable and is not in strict feedback form.
% Note $\theta_1$ and $\theta_2$ are unmatched parameters.
The true model parameters are $\theta^* = [-1.8,\,-2.4,\,-0.75,\,-2.25]^\top$ with the set of allowable variations ${\theta} \in [-2.1,\,1.5] \times [-3,\,1.5] \times [-1.8,\,2.25] \times [-5.25,\,1.5]$; the projection operator from \cite{slotine1991applied} was used to bound each parameter.
The dynamic adaptation gains took the form $\gamma_i(\rho_i) = 0.9\, e^{\rho_i} + 0.1$ and $\eta_i = 10+\vartheta_i^2$ for $i=1,2$.
Similar to \cite{lopez2021universal}, the Riemannian energy of a geodesic connecting $x$ and $x_d$ was chosen to be the uclf.

\Cref{fig:E} shows the uclf---an indicator of tracking performance---with and without adaptation.
The proposed method (\cref{cor:update_law_1}) successfully stabilizes the origin as predicted by \cref{thm:dyn_gain}.
This is in stark contrast to the no adaptation case where only bounded error can be acheived.
The uclf with leakage modification (omitted for clarity) exhibits nearly identical behavior as the uclf with the update law from \cref{cor:update_law_1} which confirms the result stated in \cref{thm:leak}. 
\Cref{fig:gains} shows the individual adaptation gains with and without the leakage modification. 
The nominal case (blue) sees a reduction in the adaptations gains by 47\% and 9\%, respectively, indicating the adaptation transients of $\theta_1$ would have a large destabilizing effect if not compensated for.
The leakage modification (red, $\lambda = 1$) ensures the adaptation gains return to their nominal values as predicted by \cref{thm:leak}.

\begin{figure}[t!]
\centering 
  \subfloat[Unmatched clf.]{\includegraphics[trim=100 0 180 0, clip, width=.48\columnwidth]{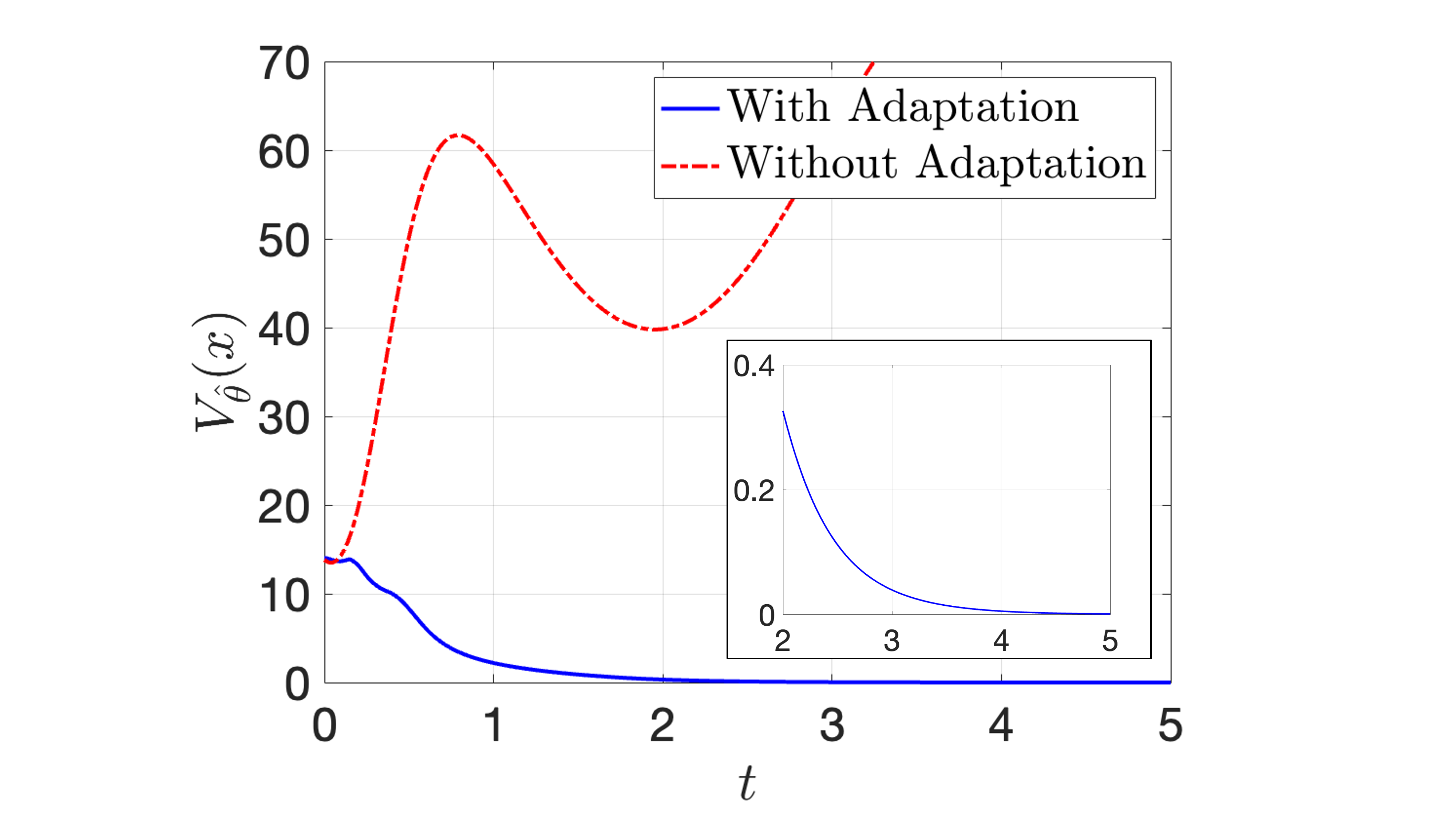}\label{fig:E}}
  \hspace{0.3em}
  \subfloat[Adaptation gains.]{\includegraphics[trim=100 0 180 0, clip, width=.48\columnwidth]{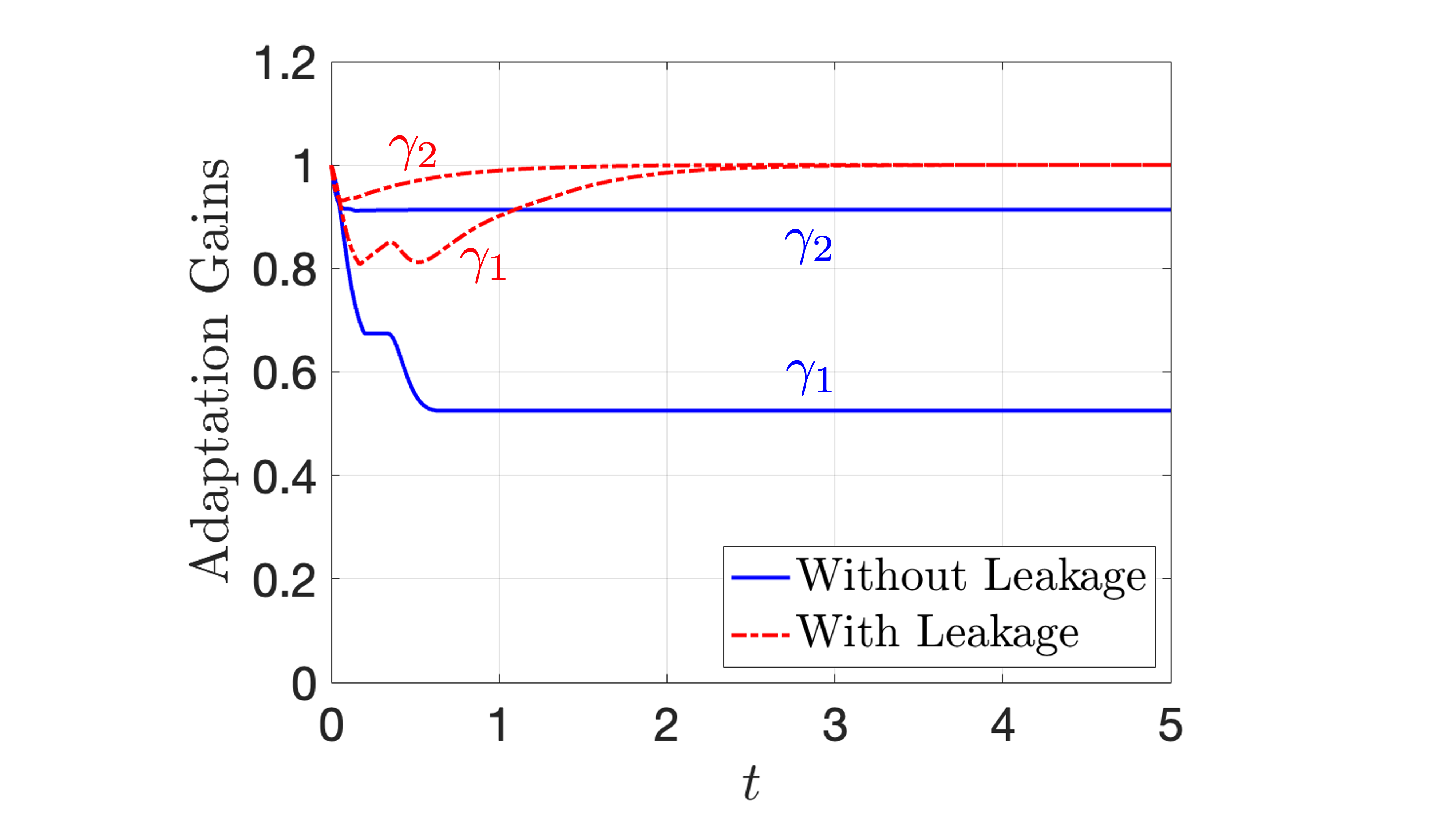}\label{fig:gains}}
  \caption{Performance and behavior of the dynamic adaptation gains method. (a): The uclf with and without adaptation shows the method is able to stabilize the origin while the system is unstable with no adaptation. (b): The adaptation gains are reduced by as much as 47\% to ensure stability. When the leakage modification is added the gains return to their nominal values as desired.} 
    \label{fig:results}
    \vskip -0.2in
\end{figure}

%%%%%%%%%%%%%%%%%%%%%%%%%%%%%%%%%%%%%%%%%%%%%%%%%%%%%%%%%%%%%%%%%%%%
\section{Conclusion}
We presented a new direct adaptation law that adjusts individual adaptation gains online to achieve stable closed-loop control and learning for nonlinear systems with unmatched uncertainties.
A bound on the rate of change of individual adaptation gains that prevents destabilization by the adaptation transients was derived.
Noteworthy extensions and modifications were also discussed.
The results presented here have important implications in adaptive safety \cite{lopez2023unmatched} and adaptive optimal control \cite{lopez2021adaptive}.
Future work will explore these implications in addition to deeper investigations into fundmanetal properties of the approach.

%%%%%%%%%%%%%%%%%%%%%%%%%%%%%%%%%%%%%%%%%%%%%%%%%%%%%%%%%%%%%%%%%%%%
\section*{Appendix}
\label{sec:appendix}
\begin{proof}[{Proof of \cref{lemma:filter}}]
    {
    Consider the time interval $t \in [t_0,\,t_1]$ where $\bar{w} \triangleq {\sup}_{t\in[t_0,\,t_1]}(|Kw(x(t))|)$ is the supremum of the input to \cref{eq:leak} over the time interval of interest \cite{krstic1995nonlinear}.
    Note the $i$ subscript is dropped for clarity.
    Consider the comparison system $\dot{z} = - \lambda a(z) z + a(z) \bar{w}$ where $|\rho| \leq z$ uniformly and $a(z) \triangleq 2 \frac{\gamma(z)^2}{\nabla \gamma(z)}$ is uniformly strictly positive. 
    % The $\sup(\cdot)$ exists because $w(x)$ is bounded by assumption.
    The comparison system has the virtual dynamics $\dot{y} = - \lambda a(z) y + a(z) \bar{w}$ which is contracting in $y$ so any two particular solutions converge exponentially to each other \cite{wang2005partial}.
    Letting $\alpha(t) \triangleq \lambda \, \int_{t_0}^t a(\rho(\tau))\, d \tau > 0$, then $y(t) = y(t_0) e^{-\alpha(t)} + \tfrac{1}{\lambda}(1-e^{-\alpha(t)})\,\bar{w}$ is the solution to the virtual dynamics and is bounded.
    Because $z$ is a particular solution of the virtual system and $|\rho| \leq z$, $\rho$ is bounded.
    Also, as $w(x) \rightarrow 0$ then so does $\bar{w}$ over some time interval, hence $y \rightarrow 0$ and in turn $\rho \rightarrow 0$ as desired.
    }
\end{proof}
\begin{theorem}[cf.~\cite{lopez2021universal}]
    Consider the uncertain system \cref{eq:dyn} with $x_d$ being the desired equilibrium.
    If an uclf \, $V_\theta(x, t)$ \, exists, then, for any strictly-increasing and uniformly-positive scalar function $\upsilon(\rho)$, $x \rightarrow x_d$ asymptotically with the adaptation law
    \begin{equation*}
        \begin{aligned}
            \dot{\hat{\theta}} & = - \upsilon(\rho) \,  \Gamma \,  \Delta(x,t) \, \frac{\partial V_{\hat{\theta}}}{\partial x}, \\
            \dot{\rho} & = - \frac{\upsilon(\rho)}{\nabla \upsilon(\rho)} \frac{1}{V_{\hat{\theta}}(x,t) + c} \frac{\partial V_{\hat{\theta}}}{\partial \hat{\theta}}^\top \, \dot{\hat{\theta}},
        \end{aligned}
    \end{equation*}
    where $\Gamma$ is a symmetric positive-definite matrix and $c \in \mathbb{R}_{>0}$.
\end{theorem}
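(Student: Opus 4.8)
The plan is to follow the same Lyapunov-with-Barbalat template used in \cref{thm:dyn_gain}, but now absorb the single scalar gain modulation $\upsilon(\rho)$ together with the $\log(V_{\hat\theta}+c)$ trick highlighted in \cref{remark:mult}. The candidate Lyapunov-like function is
\begin{equation*}
    V_c(t) = V_{\hat\theta}(x,t) + \tfrac{1}{2}\,\tilde\theta^\top \Gamma^{-1}\tilde\theta\,\big(V_{\hat\theta}(x,t)+c\big)^{-1}\cdot(\text{something})
\end{equation*}
— more precisely, I expect the correct choice to be of the form $V_c(t)=\log\!\big(V_{\hat\theta}(x,t)+c\big) + \tfrac{1}{2}\tilde\theta^\top\Gamma^{-1}\tilde\theta\,/\,(\text{scaling})$, or a variant where the parameter-error quadratic is divided by $\upsilon(\rho)$ in analogy with \cref{eq:Vc}. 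The first step is to pin down this $V_c$ so that differentiation produces the uclf decrease $-Q_{\hat\theta}/(V_{\hat\theta}+c)$ plus the familiar sign-indefinite cross term $\tfrac{1}{V_{\hat\theta}+c}\nabla_x V_{\hat\theta}^\top\Delta^\top\tilde\theta$ and a parameter-transient term $\propto \tfrac{1}{V_{\hat\theta}+c}\nabla_{\hat\theta}V_{\hat\theta}^\top\dot{\hat\theta}$.

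Second, I would substitute the adaptation law $\dot{\hat\theta}=-\upsilon(\rho)\,\Gamma\,\Delta\,\nabla_x V_{\hat\theta}$: as in \cref{thm:dyn_gain}, the cross term cancels exactly against the $\tilde\theta^\top\Gamma^{-1}\dot{\hat\theta}$ contribution (up to the $(V_{\hat\theta}+c)^{-1}$ factor, which is common to both), leaving a residual $\tfrac{1}{V_{\hat\theta}+c}\nabla_{\hat\theta}V_{\hat\theta}^\top\dot{\hat\theta}$ plus a term from differentiating the $\upsilon(\rho)$-dependent denominator, proportional to $\dot\rho\,\nabla\upsilon(\rho)/\upsilon(\rho)^2$. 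Third, I substitute the $\dot\rho$ law: by construction $\dot\rho\,\nabla\upsilon(\rho) = -\,\upsilon(\rho)\,(V_{\hat\theta}+c)^{-1}\nabla_{\hat\theta}V_{\hat\theta}^\top\dot{\hat\theta}$, which is designed to cancel precisely the leftover transient term. After this cancellation one is left with $\dot V_c(t)\le -Q_{\hat\theta}(x)/(V_{\hat\theta}(x,t)+c)$, which is negative whenever $x\neq x_d$.

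Fourth, the convergence argument mirrors the end of the proof of \cref{thm:dyn_gain}: $V_c$ is nonincreasing and bounded below (here using that $\log(V_{\hat\theta}+c)\ge \log c$ and the quadratic term is nonnegative, given $\upsilon$ uniformly positive), so $V_{\hat\theta}$, hence $x$ by radial unboundedness, and $\tilde\theta$ are bounded; $\upsilon(\rho)$ staying uniformly positive keeps the denominator bounded away from zero and bounded above on the resulting compact set, so $Q_{\hat\theta}(x)/(V_{\hat\theta}+c)$ is uniformly continuous along trajectories; integrating the differential inequality and invoking Barbalat's lemma gives $Q_{\hat\theta}(x(t))\to 0$, and $Q_{\hat\theta}(x)=0\iff x=x_d$ finishes the proof. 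I expect the main obstacle to be bookkeeping in steps two and three: verifying that the extra factor $(V_{\hat\theta}+c)^{-1}$ threads consistently through the cross-term cancellation and that the $\dot\rho$ law as written — with $\upsilon/\nabla\upsilon$ and the single $(V_{\hat\theta}+c)^{-1}$ — exactly matches the residual transient term, including getting the scalar $\tilde\theta^\top\Gamma^{-1}\Gamma\,(\cdots)$ contractions right; a secondary subtlety is confirming that no lower bound on $\rho$ (unlike the $\rho_i$ discussion in \cref{sub:leak}) is needed here because $\upsilon$ uniformly positive already suffices, which is why the theorem statement only asks for $\upsilon$ strictly-increasing and uniformly-positive rather than invoking \cref{def:adag}.
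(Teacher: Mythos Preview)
The paper does not actually prove this theorem; it is stated in the Appendix as a recall of the result from \cite{lopez2021universal}, so there is no in-paper proof to compare against. That said, your proposal has a genuine gap at exactly the point you flag as ``the main obstacle.''

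With the candidate $V_c=\log(V_{\hat\theta}+c)+\tfrac{1}{2\upsilon(\rho)}\tilde\theta^\top\Gamma^{-1}\tilde\theta$ (the only concrete form you write down), the cross-term cancellation fails: differentiating the $\log$ part produces $\tfrac{1}{V_{\hat\theta}+c}\,\partial_x V_{\hat\theta}^\top\Delta^\top\tilde\theta$, while differentiating the quadratic part and inserting $\dot{\hat\theta}=-\upsilon(\rho)\Gamma\Delta\,\partial_x V_{\hat\theta}$ produces $-\tilde\theta^\top\Delta\,\partial_x V_{\hat\theta}$ with \emph{no} $(V_{\hat\theta}+c)^{-1}$ factor. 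Your parenthetical ``which is common to both'' is incorrect, and no rescaling of the quadratic by a function of $\rho$ alone can supply a state-dependent factor $(V_{\hat\theta}+c)^{-1}$. The $\dot\rho$ law is designed to kill the transient term $\partial_{\hat\theta}V_{\hat\theta}^\top\dot{\hat\theta}$, not to repair this mismatch.

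The fix is to use the \emph{product} rather than the logarithm. Take
\[
V_c(t)=\upsilon(\rho)\bigl(V_{\hat\theta}(x,t)+c\bigr)+\tfrac{1}{2}\,\tilde\theta^\top\Gamma^{-1}\tilde\theta.
\]
Then $\dot\upsilon=\nabla\upsilon\,\dot\rho=-\tfrac{\upsilon}{V_{\hat\theta}+c}\,\partial_{\hat\theta}V_{\hat\theta}^\top\dot{\hat\theta}$, so $\tfrac{d}{dt}\bigl[\upsilon(V_{\hat\theta}+c)\bigr]\le \upsilon\bigl[-Q_{\hat\theta}+\partial_x V_{\hat\theta}^\top\Delta^\top\tilde\theta\bigr]$, the transient term having cancelled exactly. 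The quadratic contributes $\tilde\theta^\top\Gamma^{-1}\dot{\hat\theta}=-\upsilon\,\tilde\theta^\top\Delta\,\partial_x V_{\hat\theta}$, which now matches the cross term since both carry the same factor $\upsilon$. Hence $\dot V_c\le -\upsilon(\rho)\,Q_{\hat\theta}(x)$. Since $\upsilon>0$ and $c>0$, $V_c\ge 0$; monotonicity gives $\upsilon(\rho)\le V_c(0)/c$ and $V_{\hat\theta}\le V_c(0)/\upsilon_{\min}$, so $x$, $\tilde\theta$, and $\rho$ are bounded, and your Barbalat step four goes through verbatim with $\upsilon(\rho)Q_{\hat\theta}$ in place of $Q_{\hat\theta}/(V_{\hat\theta}+c)$, using $\upsilon\ge\upsilon_{\min}>0$ to conclude $Q_{\hat\theta}\to 0$.
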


%%%%%%%%%%%%%%%%%%%%%%%%%%%%%%%%%%%%%%%%%%%%%%%%%%%%%%%%%%%%%%%%%%%%
\vspace{0.1in}
\noindent \textbf{Acknowledgements} \ \  We thank Miroslav Krstic for stimulating discussions.
\vspace{-0.2in}

%%%%%%%%%%%%%%%%%%%%%%%%%%%%%%%%%%%%%%%%%%%%%%%%%%%%%%%%%%%%%%%%%%%%

% \balance
\bibliographystyle{ieeetr}
\bibliography{ref}

\end{document}